\let\mathbb\varmathbb
\let\definitionref\cref
\let\lemmaref\cref
\crefname{lemma}{Lemma}{Lemmas}
\crefname{fact}{Fact}{Facts}
\crefname{theorem}{Theorem}{Theorems}
\crefname{corollary}{Corollary}{Corollaries}
\crefname{claim}{Claim}{Claims}
\crefname{example}{Example}{Examples}
\crefname{algorithm}{Algorithm}{Algorithms}
\crefname{problem}{Problem}{Problems}
\crefname{definition}{Definition}{Definitions}
\crefname{exercise}{Exercise}{Exercises}
\crefname{condition}{Condition}{Conditions}
\crefname{property}{Property}{Properties}
\newtheorem{theorem}{Theorem}[section]
\newtheorem*{theorem*}{Theorem}
\newtheorem{lemma}[theorem]{Lemma}
\newtheorem*{lemma*}{Lemma}
\newtheorem{fact}[theorem]{Fact}
\newtheorem*{fact*}{Fact}
\newtheorem{proposition}[theorem]{Proposition}
\newtheorem*{proposition*}{Proposition}
\newtheorem{corollary}[theorem]{Corollary}
\newtheorem*{corollary*}{Corollary}
\newtheorem*{hypothesis*}{Hypothesis}
\newtheorem*{conjecture*}{Conjecture}
\theoremstyle{definition}
\newtheorem{definition}[theorem]{Definition}
\newtheorem*{definition*}{Definition}
\newtheorem*{construction*}{Construction}
\newtheorem*{example*}{Example}
\newtheorem*{question*}{Question}
\newtheorem{algorithm}[theorem]{Algorithm}
\newtheorem*{algorithm*}{Algorithm}
\newtheorem*{assumption*}{Assumption}
\newtheorem*{problem*}{Problem}
\newtheorem*{openquestion*}{Open Question}
\theoremstyle{remark}
\newtheorem*{claim*}{Claim}
\newtheorem*{remark*}{Remark}
\newtheorem*{observation*}{Observation}
\let\originalleft\left
\let\originalright\right
\renewcommand{\left}{\mathopen{}\mathclose\bgroup\originalleft}
\renewcommand{\right}{\aftergroup\egroup\originalright}
\let\latexparagraph\paragraph
\RenewDocumentCommand{\paragraph}{som}{%
	\IfBooleanTF{#1}
	{\latexparagraph*{#3}}
	{\IfNoValueTF{#2}
		{\latexparagraph{\maybe@addperiod{#3}}}
		{\latexparagraph[#2]{\maybe@addperiod{#3}}}%
	}%
}
\newcommand{\maybe@addperiod}[1]{%
	#1\@addpunct{.}%
}
\newcommand{\paren}[1]{(#1)}
\newcommand{\Paren}[1]{\left(#1\right)}
\newcommand{\brac}[1]{[#1]}
\newcommand{\Brac}[1]{\left[#1\right]}
\newcommand{\abs}[1]{\lvert#1\rvert}
\newcommand{\Abs}[1]{\left\lvert#1\right\rvert}
\newcommand{\card}[1]{\lvert#1\rvert}
\newcommand{\Card}[1]{\left\lvert#1\right\rvert}
\newcommand{\set}[1]{\{#1\}}
\newcommand{\Set}[1]{\left\{#1\right\}}
\newcommand{\norm}[1]{\lVert#1\rVert}
\newcommand{\Norm}[1]{\left\lVert#1\right\rVert}
\newcommand{\normi}[1]{\norm{#1}_\infty}
\newcommand{\iprod}[1]{\langle#1\rangle}
\newcommand{\Iprod}[1]{\left\langle#1\right\rangle}
\newcommand{\Esymb}{\mathbb{E}}
\newcommand{\Psymb}{\mathbb{P}}
\newcommand{\Covsymb}{\mathrm{Cov}}
\DeclareMathOperator*{\E}{\Esymb}
\DeclareMathOperator*{\ProbOp}{\Psymb}
\DeclareMathOperator*{\Cov}{\Covsymb}
\renewcommand{\Pr}{\ProbOp}
\newcommand{\suchthat}{\;\middle\vert\;}
\newcommand{\pE}{\tilde{\mathbb{E}}}
\newcommand{\mper}{\,.}
\newcommand\bdot\bullet
\DeclareMathOperator{\Tr}{Tr}
\DeclareMathOperator{\poly}{poly}
\DeclareMathOperator{\polylog}{polylog}
\DeclareMathOperator{\sign}{sign}
\DeclareMathOperator{\conv}{conv}
\DeclareMathOperator{\Conv}{Conv}
\newcommand{\iid}{i.i.d.\xspace}
\newcommand{\N}{\mathbb N}
\newcommand{\R}{\mathbb R}
\newcommand{\cB}{\mathcal B}
\newcommand{\cC}{\mathcal C}
\newcommand{\cD}{\mathcal D}
\newcommand{\cF}{\mathcal F}
\newcommand{\cN}{\mathcal N}
\newcommand{\cP}{\mathcal P}
\newcommand{\cQ}{\mathcal Q}
\newcommand{\cS}{\mathcal S}
\newcommand{\cV}{\mathcal V}
\newcommand{\cW}{\mathcal W}
\newcommand{\cY}{\mathcal Y}
\renewcommand{\leq}{\leqslant}
\renewcommand{\le}{\leqslant}
\renewcommand{\ge}{\geqslant}
\let\epsilon=\varepsilon
\numberwithin{equation}{section}
\newcommand\MYcurrentlabel{xxx}
\newcommand{\MYstore}[2]{%
  \global\expandafter \def \csname MYMEMORY #1 \endcsname{#2}%
}
\newcommand{\MYload}[1]{%
  \csname MYMEMORY #1 \endcsname%
}
\newcommand{\MYnewlabel}[1]{%
  \renewcommand\MYcurrentlabel{#1}%
  \MYoldlabel{#1}%
}
\newcommand{\MYdummylabel}[1]{}
\newcommand{\torestate}[1]{%
  \let\MYoldlabel\label%
  \let\label\MYnewlabel%
  #1%
  \MYstore{\MYcurrentlabel}{#1}%
  \let\label\MYoldlabel%
}
\newcommand{\restatetheorem}[1]{%
  \let\MYoldlabel\label
  \let\label\MYdummylabel
  \begin{theorem*}[Restatement of \cref{#1}]
    \MYload{#1}
  \end{theorem*}
  \let\label\MYoldlabel
}
\newcommand{\restatelemma}[1]{%
  \let\MYoldlabel\label
  \let\label\MYdummylabel
  \begin{lemma*}[Restatement of \cref{#1}]
    \MYload{#1}
  \end{lemma*}
  \let\label\MYoldlabel
}
\newcommand{\restateprop}[1]{%
  \let\MYoldlabel\label
  \let\label\MYdummylabel
  \begin{proposition*}[Restatement of \cref{#1}]
    \MYload{#1}
  \end{proposition*}
  \let\label\MYoldlabel
}
\newcommand{\restatefact}[1]{%
  \let\MYoldlabel\label
  \let\label\MYdummylabel
  \begin{fact*}[Restatement of \cref{#1}]
    \MYload{#1}
  \end{fact*}
  \let\label\MYoldlabel
}
\newcommand{\restate}[1]{%
  \let\MYoldlabel\label
  \let\label\MYdummylabel
  \MYload{#1}
  \let\label\MYoldlabel
}
\newcommand{\e}{\epsilon}
\newcommand*{\Id}{\mathrm{Id}}
\newcommand*{\normf}[1]{\norm{#1}_{\mathrm{F}}}
\newcommand*{\Normf}[1]{\Norm{#1}_{\mathrm{F}}}
\newcommand{\ind}[1]{\mathbf{1}_{\Brac{#1}}}
\newcommand*{\transpose}[1]{{#1}{}^{\mkern-1.5mu\mathsf{T}}}
\newcommand{\simiid}{\stackrel{\text{iid}}\sim}
\newcommand{\CS}{Cauchy--Schwarz }
\newcommand{\effrank}{\mathrm{erk}}
\newcommand{\spsign}{\textrm{spsign}}
\title{Nearly Optimal Robust Covariance and Scatter Matrix Estimation Beyond Gaussians}
\author{Gleb Novikov\thanks{Lucerne School of Computer Science and Information Technology}}
\begin{document}

\clearpage\maketitle
\thispagestyle{empty}

\begin{abstract}%
We study the problem of \emph{computationally efficient} robust estimation of the covariance/scatter matrix of elliptical distributions---that is, affine transformations of spherically symmetric distributions---under the \emph{strong contamination model} in the high-dimensional regime $d \gtrsim 1/\varepsilon^2$, where $d$ is the dimension and $\varepsilon$ is the fraction of adversarial corruptions. We show that the structure inherent to elliptical distributions enables us to achieve estimation guarantees comparable to those known for the Gaussian case.

Concretely, we propose an algorithm that, under a {very mild assumption} on the scatter matrix $\Sigma$, and given a nearly optimal number of samples $n = \tilde{O}(d^2/\varepsilon^2)$, computes in polynomial time an estimator $\hat{\Sigma}$ such that, with high probability,
\[
\left\| \Sigma^{-1/2} \hat{\Sigma} \Sigma^{-1/2} - \Id \right\|_{\text F} \le O(\varepsilon \log(1/\varepsilon))\,.
\]
This matches the best known guarantees for robust covariance estimation in the Gaussian setting.

As an application of our result, we obtain the \emph{first efficiently computable, nearly optimal robust covariance estimators} that extend beyond the Gaussian case. Specifically, for elliptical distributions satisfying the Hanson--Wright concentration inequality (including, for example, Gaussians and uniform distributions over ellipsoids), our estimator $\hat{\Sigma}$ of the covariance $\Sigma$ achieves the same error guarantee as in the Gaussian case:
\[
\left\| \Sigma^{-1/2} \hat{\Sigma} \Sigma^{-1/2} - \Id \right\|_{\text F} \le O(\varepsilon \log(1/\varepsilon))\,.
\]
Moreover, for elliptical distributions with sub-exponential tails (such as the multivariate Laplace distribution), we construct an estimator $\hat{\Sigma}$ satisfying the spectral norm bound
\[
\left\| \Sigma^{-1/2} \hat{\Sigma} \Sigma^{-1/2} - \Id \right\| \le O(\varepsilon \log(1/\varepsilon))\,.
\]
Remarkably, despite the heavier tails of such distributions, the covariance can still be estimated at the same rate as in the Gaussian case---a phenomenon unique to high dimensions and absent in low-dimensional settings.

Our approach is based on estimating the covariance of the \emph{spatial sign} (i.e., the projection onto the sphere) of elliptical distributions. The estimation proceeds in several stages, one of which involves a novel \emph{spectral covariance filtering algorithm}. This algorithm combines  covariance filtering techniques with degree-4 sum-of-squares relaxations, and we believe it may be of independent interest for future applications.

\end{abstract}

\section{Introduction}
\label{sec:introduction}

Computationally efficient robust covariance estimation is a well-known example of a problem where guarantees for the Gaussian distribution are significantly stronger than for other, even very well-behaved, distributions. In particular, the stability-based filtering algorithm for the Gaussian case from \cite{DiakonikolasKK016} relies not only on strong concentration bounds, but also on a specific algebraic relationship between the second and fourth moments of the Gaussian distribution. These assumptions are clearly quite restrictive and prevent the extension of this algorithm to any meaningful class of distributions beyond Gaussians.

Recently, significant progress has been made on \emph{non-optimal} computationally efficient robust covariance estimation for certain natural classes of distributions that generalize Gaussians. The result of \cite{sos-subgaussian} implies that dimension-independent error in relative spectral norm is achievable in polynomial time for all sub-Gaussian distributions. Similarly, \cite{sos-poincare} showed that dimension-independent error in relative Frobenius norm is achievable in polynomial time for all distributions satisfying the Poincaré inequality (with a dimension-independent constant). Their results also imply nearly optimal estimation with \emph{quasi-polynomial} number of samples (and quasi-polynomial running time).

We show that the class of \emph{elliptical distributions} is, in some sense, the right generalization of Gaussian distributions for robust covariance estimation. We hope that our work will motivate further research on elliptical distributions within algorithmic high-dimensional robust statistics.

Elliptical distributions are well-studied in the statistical literature and form a rich class of distributions.
Notable examples include the Gaussian distribution, multivariate Student's $t$-distributions (including the multivariate Cauchy), symmetric multivariate stable distributions, multivariate logistic distributions, multivariate symmetric hyperbolic distributions, and the multivariate Laplace distribution.

Scatter\footnote{Since each elliptical distribution can be written as $Ax$, where $x$ has spherically symmetric distribution, $AA^\top$ is proportional to the covariance whenever it exists. $\Sigma = AA^\top$ is called the scatter matrix. Note that it is only defined up to a multiplicative factor.} matrix estimation is not an easy problem even in the absence of corruptions. In the setting $d=\Theta(1)$, consistent estimation of the scatter matrix $\Sigma$ of an elliptical distribution is challenging, and has been studied, particularly, in \cite{Tyler1987StatisticalAF} and \cite{MAGYAR-TYLER}. 
Consistent estimation of the eigenvectors of $\Sigma$ is significantly easier, and is called \emph{elliptical component analysis} (see, for example, \cite{ECA}). 
This is because the covariance matrix of the spatial sign (the projection onto the sphere) has the same eigenvectors as $\Sigma$, and can be easily estimated, since the spatial sign has sub-Gaussian distribution. In the high-dimensional setting, the eigenvalues of $\Sigma$ can also be accurately estimated (as we show, up to factor $\Paren{1 + O(1/d)}$).


Robust scatter matrix estimation of elliptical distributions was studied\footnote{In Huber's contamination model.} in \cite{chen2018robust}. They provided a robust estimator with optimal error in spectral norm. Their estimator, however, requires exponential computation time. In this work we show that a comparable, nearly optimal error can be achieved in polynomial time.

\subsection{Problem Set-Up}\label{sec:set-up}


We use the \emph{strong contamination model} for corruptions. 
\begin{definition}\label{def:corruption}
    Let $ x_1, \ldots,  x_n$ be \iid samples from a distribution $\cD$ and let $\e > 0$.
    We say that $z_1, \ldots, z_n$ are an {$\e$-corruption} of $x_1, \ldots,  x_n$, if they agree on at least $(1-\e)n$ points.
    The remaining $\e$-fraction of the points can be arbitrary and in particular, can be corrupted by a computationally unbounded adversary with full knowledge of the model, our algorithm, and all problem parameters.
\end{definition}
This model captures other models frequently studied in the literature, in particular Huber's contamination model, where
$z_1\ldots z_n \simiid \Paren{1-\e} \cD + \e \cQ$,
where $\cQ$ is an arbitrary (unkown) distribution in $\R^d$.

Below we give a formal definition of \emph{elliptical distributions}.
\begin{definition}
    \label{def:elliptical}
    A distribution $\cD$ over $\R^d$ is {elliptical}, if for $ x \sim \cD$,
    \[
         x = \mu + \xi A  U\,,
    \]
    where $U$ is a uniformly distributed over the ($(d-1)$-dimensional) unit sphere in $\R^d$, $\xi$ is a positive random variable independent of $U$, $\mu\in \R^d$ is a fixed vector,  and $A\in \R^{d\times d}$ is a fixed linear transformation.
\end{definition}
Vector $\mu$ is called the \emph{location} of $\cD$, and the matrix $\Sigma = A\transpose{A}$ is called the \emph{scatter} matrix of $\cD$. Note that the scatter matrix is only defined up to a positive constant factor. If $\cD$ has finite covariance matrix, then $\Sigma$ is proportional to it, and if $\cD$ has finite expected value, it is equal to $\mu$.

To formulate our main result, we need a notion of the \emph{effective rank}.
\begin{definition}\label{def:effective-rank}
    Let $\Sigma$ be a positive definite matrix. Its {effective rank} is 
    \[
    \effrank(\Sigma) := \frac{\Tr{\Sigma}}{\norm{\Sigma}}\,.
    \]
\end{definition}
Note that $\effrank(\Sigma)$ is scale-invariant, and hence is the same for all scatter matrices. For $\Sigma \in \R^{d\times d}$, $1 \le \effrank(\Sigma) \le d$. In addition, $\effrank(\Sigma) \ge d / \kappa(\Sigma)$, where $\kappa(\Sigma)$ is the condition number of $\Sigma$. Furthermore, the effective rank is almost not affected by small eigenvalues of $\Sigma$: In particular, if we take the smallest half of the eigenvalues of $\Sigma$ and replace them by arbitrarily small positive values, the effective rank of the resulting matrix is at least $\effrank(\Sigma)/2$.

\paragraph{Error Metrics}
We use \emph{relative spectral norm} and \emph{relative Frobenius norm}. Relative Frobenius norm is defined as follows
\[
 \normf{\Sigma^{-1/2}\, \hat{\Sigma}\, \Sigma^{-1/2} - \Id}\,,
\]
If  it is bounded by some $\Delta$, then 
\[
\normf{\hat{\Sigma} - \Sigma}
=\normf{\Sigma^{1/2}\Paren{\Sigma^{-1/2}\, \hat{\Sigma}\, \Sigma^{-1/2} - \Id}\Sigma^{1/2}}
\le\norm{\Sigma} \cdot \Delta\,,
\]
since for all $A,B\in \R^{d\times d}$, $\Normf{AB}\le \Norm{A}\cdot \Normf{B}$. In addition, if
\[
\normf{\Sigma^{-1/2}\, \hat{\Sigma}\, \Sigma^{-1/2} - \Id} \le \Delta \le 1\,,
\]
then $ \normf{\hat{\Sigma}^{-1/2}\Sigma^{-1/2}\hat{\Sigma}^{-1/2} - \Id} \le O\Paren{\Delta}$.

Relative spectral norm is
\[
 \norm{\Sigma^{-1/2}\, \hat{\Sigma}\, \Sigma^{-1/2} - \Id}\,,
\]
Similarly, if it is bounded by some $\Delta$, then 
\[
\norm{\hat{\Sigma} - \Sigma}
\le\norm{\Sigma} \cdot \Delta\,.
\]
If
\[
\norm{\Sigma^{-1/2}\, \hat{\Sigma}\, \Sigma^{-1/2} - \Id} \le \Delta \le 1\,,
\]
then $ \norm{\hat{\Sigma}^{-1/2}\Sigma^{-1/2}\hat{\Sigma}^{-1/2} - \Id} \le O\Paren{\Delta}$.

Relative Frobenius norm is always at least as large as relative spectral norm, but it can be larger up to $\sqrt{d}$-factor.

\paragraph{Notation} We denote by $[m]$ the set $\Set{1,\ldots,m}$ and by $\log(\cdot)$ the logarithm base $e$. For matrix $A\in \R^{d\times d}$, we denote by $\Norm{A}$ its spectral norm, by $\normf{A}$ its Frobenius norm, and by $\text{vec}(A)$ the vectorization of $A$. The notations $\Omega(\cdot), O(\cdot), \gtrsim, \lesssim$ hide constant factors. $\tilde{O}(\cdot), \tilde{\Omega}(\cdot)$ hide polylogarithmic factors.

\subsection{Results}\label{sec:results}
Our main result is the following theorem.
\begin{theorem}
    \label{thm:main}
    Let $C > 0$ be a large enough absolute constant.
    Let $d,n\in \N, \e\in\R$ be such that $0 < C\log(d)/d \le \e \le 1/C$ and
    \[
    n \ge  C\cdot d^2\log^5(d)/\e^2\,.
    \]
    
    Let $\cD$ be an elliptical distribution in $\R^d$ (see \definitionref{def:elliptical}) whose scatter matrix is positive definite and satisfies 
    \[
    \effrank(\Sigma) := \frac{\Tr{\Sigma}}{\norm{\Sigma}} \ge  C\cdot \log(d)\,.
    \]
    Let $x_1,\ldots,x_n \simiid \cD$, and let $z_1,\ldots,z_n \in \R^d$ be an $\e$-corruption of $x_1,\ldots, x_n$ (see \definitionref{def:corruption}).
    
    There exists an algorithm that, given $\e$ and $z_1,\ldots,z_n$, runs in time $\poly(n)$, and  outputs  $\hat{\Sigma} \in \R^{d\times d}$ that with high probability satisfies
    \[
    \norm{\Sigma^{-1/2}\, \hat{\Sigma}\, \Sigma^{-1/2} - \Id} \le O\Paren{\e\log(1/\e)}\,,
    \]
    where $\Sigma$ is some scatter matrix of $\cD$. 

    Furthermore, if $\e \ge C\log(d)/\sqrt{d}$, then with high probability
    \[
    \normf{\Sigma^{-1/2}\, \hat{\Sigma}\, \Sigma^{-1/2} - \Id} \le O\Paren{\e\log(1/\e)}\,.
    \]
\end{theorem}

First let us discuss our assumptions. Our assumption on the effective rank $\effrank(\Sigma)$ is really mild: In particular, it is satisfied for matrices with condition $\kappa(\Sigma)\lesssim d/\log(d)$.
Our assumptions on $\e$ can be restrictive in certain settings, but they are typical in high-dimensional robust statistics. In particular, works on fast robust estimation (for example, \cite{fast, fast-covariance}) usually focus on the regime $\e = d^{-o(1)}$.

To the best of our knowledge, prior to this work, no scatter matrix estimator with dimension-independent error in relative Frobenius norm was known for general elliptical distributions. For (weaker) spectral norm error and under (weaker) Huber's contamination model, \cite{chen2018robust} showed that if $n \gtrsim d/\e$, there exists an estimator computable in \emph{exponential} time, such that  $\norm{\hat{\Sigma} - \Sigma} \le O(\norm{\Sigma}\cdot\e)$ with high probability. This error is information theoretically optimal, even for the Gaussian case. 

Due to the statistical query lower bound from \cite{SQ-bounds}, error $O\Paren{\e\log^{1-\Omega(1)}(1/\e)}$ likely cannot be achieved by polynomial time algorithms (even in spectral norm), so our error bound seems to be not improvable. While their lower bound is not for scatter estimation, but for (more complicated) covariance estimation, we show that it is possible to very accurately estimate the scaling factor (up to factor $(1 + \tilde{O}(\e)/\sqrt{d})$), and hence  scatter matrix estimation  up to error $O\Paren{\e\log^{1-\Omega(1)}(1/\e)}$ would imply the same error for robust covariance estimation in relative spectral norm, which means that our scatter matrix estimation error bound is likely not improvable for estimators computable in time $\poly(d)$.

 To the best of our knowledge, prior to this work, computationally efficient robust covariance estimation with dimension-independent error was only known for distributions with moment bounds certifiable in sum-of-squares. We show that dimension-independent errors are achievable for the \emph{scatter} matrix estimation of arbitrary elliptical distributions, without any assumptions on the moments (recall that even the first moment is not required to exist).

Further in this section we discuss some applications of \cref{thm:main}.

\subsubsection{Applications for Robust PCA}

The scatter matrix contains a lot of useful information about the distribution. This information can be used in applications, in particular, for \emph{principal component analysis}\footnote{The (non-robust) principal component analysis for elliptical distributions was first studied in \cite{ECA}.}. In particular, we can robustly learn the leading eigenvector of $\Sigma$ assuming some gap between the first and the second largest eigenvalues:

\begin{corollary}\label{cor:pca}
    Let $C > 0$ be a large enough absolute constant.
    Let $d,n\in \N, \e\in\R$ be such that $0 < C\log(d)/d \le \e \le 1/C$ and
    \[
    n \ge  C\cdot d^2\log^5(d)/\e^2\,.
    \]
    Let $\cD$ be an elliptical distribution in $\R^d$ (see \definitionref{def:elliptical}) whose scatter matrix is positive definite and satisfies 
    \[
    \effrank(\Sigma) := \frac{\Tr{\Sigma}}{\norm{\Sigma}} \ge  C\cdot \log(d)\,.
    \]
     Suppose that for some $\gamma > 0$, $\frac{\lambda_1 - \lambda_2}{\lambda_1} \ge \gamma$, where $\lambda_1$ and $\lambda_2$ are respectively the first and the second largest eigenvalues of $\Sigma$.
    
    Let $x_1,\ldots,x_n \simiid \cD$, and let $z_1,\ldots,z_n \in \R^d$ be an $\e$-corruption of $x_1,\ldots, x_n$ (see \definitionref{def:corruption}).
    
    There exists an algorithm that, given $\e$ and $z_1,\ldots,z_n$, runs in time $\poly(n)$, and  outputs  $\hat{v} \in \R^{d}$ that with high probability satisfies
    \[
    \normf{\hat{v}\hat{v}^\top - vv^\top}\le O\Paren{\frac{\e\log(1/\e)}{\gamma}}\,,
    \]
    where $v$ is the leading eigenvector of $\Sigma$. 
\end{corollary}

This result is an easy consequence of the spectral norm bound from \cref{thm:main}. 
The error bound matches the best known guarantees for the Gaussian distribution \cite{robust-pca-li, robust-pca-diak}. However, the sample complexity is worse. This is not surprising, since the algorithms for the Gaussian case were specifically designed for robust PCA. Improving the sample complexity of robust PCA for elliptical distributions is beyond the scope of this work, but we believe it to be an interesting direction for future research.

\subsubsection{Applications for Robust Covariance Estimation}
    \cref{thm:main} shows that it is possible to robustly estimate the scatter matrix. As was previously mentioned, this allows us to learn many useful things about the covariance (when it exists): the eigenvectors, the effective rank, the condition number, the ratio between the first and the second largest eigenvalues, etc. However, learning the scaling factor between the scatter matrix that we estimate and the covariance (and hence the covariance itself) might not always be possible even when there are no corruptions. For example, in the one-dimensional case, scatter matrices are just positive numbers, and the scale can be an arbitrary positive distribution, so without assumptions on higher moments, learning its covariance (when it exists) can require arbitrarily many samples. Clearly, the same issue appears in the high-dimensional case, where the scatter matrix is non-trivial and contains all scale-invariant information about the covariance. Hence we need some moment assumptions if we want to estimate the scaling factor.

    A reasonable strategy to estimate the scaling factor is to robustly estimate\footnote{Here we assume that the location of $\cD$ is zero.} $\Tr(\Sigma) = \E\norm{x}^2$ (in this subsection we denote the covariance of $x$ by $\Sigma$ and the scatter matrix that we accurately estimated by $\tilde{\Sigma}$). Since it is a one-dimensional problem, it is easy to do in polynomial time as long as $\norm{x}^2$ is well-concentrated. 
    A natural assumption for Gaussian-type concentration of $\norm{x}^2$ is the
    \emph{Hanson-Wright inequality}:
    \begin{definition}[Hanson-Wright property]\label{def:hanson-wright}
    Let $\cD$ be a distribution in $\R^d$ with mean $\mu\in \R^d$ and positive definite covariance $\Sigma\in\R^{d\times d}$, and let $C_{HW} > 0$. We say that $\cD$ satisfies \emph{$C_{HW}$-Hanson-Wright property}, if for all $A\in \R^{d\times d}$ and $t>0$, 
    \[
    \Pr_{x\sim\cD} \Paren{\Abs{z^\top Az - \Tr A}\ge t} \le 2\exp\Paren{-\frac{1}{C_{HW}} \min\Set{\frac{t^2}{\normf{A}^2}, \frac{t}{\norm{A}}}},
    \]
    where $z = \Sigma^{-1/2}(x-\mu)$.
    \end{definition} 
    In particular, isotropic distributions that satisfy this property with $C_{HW} = O(1)$ include the standard Gaussian distribution, the uniform distribution over the sphere, the uniform distribution over the Euclidean ball, and their $O(1)$-Lipschitz transformations\footnote{Of course our results can only be applied to Lipchitz transformations of these distributions that preserve the elliptical structure.}. More generally, all $O(1)$-log-Sobolev distributions also satisfy this property. Note that the Hanson-Wright property (in our formulation) is preserved under affine transformations.
    
    If $x$ satisfies the $O(1)$-Hanson-Wright property, we can estimate $\E\norm{x}^2$ up to factor $\Paren{1+\tilde{O}(\e)/\sqrt{d}}$, and it is enough for robust covariance estimation with error $\tilde{O}(\e)$ in relative Frobenius norm. 
    
    In order to estimate the covariance in relative \emph{spectral} norm, it is enough to assume that $x$ is \emph{sub-exponential}. 
    
    \begin{definition}[Sub-exponential distributions]\label{def:sub-gaussain}
 Let $\cD$ be a distribution in $\R^d$ with mean $\mu\in\R^d$, and let $C_{SE} > 0$. We say that $\cD$ is \emph{$C_{SE}$-sub-exponential}, if for all $u\in \R^{d}$ and even $p\in \N$, 
    \[
    \Paren{\E_{x\sim \cD}\iprod{u, x-\mu}^p}^{1/p}\le C_{SE}\cdot {p}\cdot \Paren{\E_{x\sim \cD}\iprod{u, x-\mu}^2}^{1/2}\,.
    \]
    \end{definition}
    $O(1)$-sub-exponential distributions satisfy the $O(1)$-Hanson-Wright property, but the converse is not true.
    For distributions that are {simultaneously} sub-exponential and elliptical (in particular, the multivariate Laplace distribution), we can achieve nearly optimal error in relative spectral norm in polynomial time.
    Specifically, we prove the following theorem.

    \begin{theorem}\label{thm:covariance-estimation}
    Let $C > 0$ be a large enough absolute constant.
    Let $d,n\in \N, \e\in\R$ be such that $0 < C\log(d)/d \le \e \le 1/C$ and
    \[
    n \ge  C\cdot d^2\log^5(d)/\e^2\,.
    \]
    
    Let $\cD$ be an elliptical distribution in $\R^d$ (see \definitionref{def:elliptical}) with mean $\mu$ and positive definite covariance $\Sigma$ that satisfies 
    \[
    \effrank(\Sigma) := \frac{\Tr{\Sigma}}{\norm{\Sigma}} \ge  C\cdot \log(d)\,.
    \]
    Suppose in addition that $\cD$ is also $O(1)$-sub-exponential (see \cref{def:sub-gaussain}).
    Let $x_1,\ldots,x_n \simiid \cD$, and let $z_1,\ldots,z_n \in \R^d$ be an $\e$-corruption of $x_1,\ldots, x_n$ (see \definitionref{def:corruption}).
    
    There exists an algorithm that, given $\e$ and $z_1,\ldots,z_n$, runs in time $\poly(n)$, and outputs $\hat{\Sigma} \in \R^{d\times d}$ that with high probability satisfies
    \[
    \norm{\Sigma^{-1/2}\, \hat{\Sigma}\, \Sigma^{-1/2} - \Id} \le O\Paren{\e\log(1/\e)} \,.
    \]
    
    Furthermore, if $\e \ge C\log(d)/\sqrt{d}$ and $\cD$ satisfies the $O(1)$-Hanson-Wright property, then with high probability
    \[
    \normf{\Sigma^{-1/2}\, \hat{\Sigma}\, \Sigma^{-1/2} - \Id} \le O\Paren{\e\log(1/\e)}\,.
    \]

    \end{theorem}
    While it might seem surprising that we have $O(\e\log(1/\e))$ error for sub-exponential distributions instead of $O(\e\log^2(1/\e))$ that we expect to have in the one-dimensional case for (possibly symmetric) sub-exponential distributions,
    this happens due to the high-dimensional nature of the problem. Recall that $x\sim \cD$ with location $0$ can be written as $ \xi A \sqrt{d} \cdot U$, where $U$ is uniformly distributed over the unit sphere, and assume that the singular values of $A$ are $\Theta(1)$. 
    In high-dimensions, $\E\iprod{\sqrt{d} U, v}^p$ is very close to $\E_{g\sim\cN(0,\Id_d)}\iprod{g, v}^m$ unless $p$ is very large (of order $d$). Therefore, the variable $A\sqrt{d} \cdot U$ has sub-Gaussian moments\footnote{We call a distribution $\cD$ $O(1)$-\emph{sub-Gaussian} if for all $u\in \R^d$ and all even $p\in\N$, $\Paren{\E_{x\in \cD}\iprod{u, x-\mu}^p}^{1/p}\le O(\sqrt{p})\Paren{\E_{x\in \cD}\iprod{u, x-\mu}^2}^{1/2}$. Such distributions are also sometimes called \emph{hypercontractive sub-Gaussian}.}, and hence the moments of $\xi$ have to be sub-Gaussian (at least for $p \le d$), since $x$ is sub-exponential. On the other hand, since $\norm{A \sqrt{d} U}$ is a bounded random variable, $\norm{x} = \xi  \norm{A \sqrt{d} U}$ has \emph{sub-Gaussian moments}, so it is possible to robustly learn $\E \norm{x}^2$ up to factor $\Paren{1 + O(\e\log(1/\e))}$. 
    
Let us compare our results with prior work. As was previously mentioned, \cite{DiakonikolasKK016} developed a technique that achieves error $O(\e\log(1/\e))$ in relative Frobenius norm in $\poly(n)$ time with $n=\poly(d)$ samples for Gaussians.
This result exploits exact algebraic relations between Gaussian second and fourth moments, and hence is only valid for the Gaussian distribution\footnote{It can also be extended for certain mixtures of Gaussians, see section 4.3.2 of \cite{DK_book}.}.

Recently \cite{sos-poincare} showed that moment bounds on quadratic forms of arbitrary Poincar\'e distributions can be certified by (low degree) sum-of-squares proofs. \cite{Bakshi} showed that it implies that the error $O\Paren{t^2\e^{1-1/t}}$ in relative Frobenius norm is achievable for robust covariance estimation of all Poincare distributions with $d^{\poly(t)}$ samples in time $d^{\poly(t)}$. In particular, there exists an estimator that achieves error $\tilde{O}(\e)$ for all Poincar\'e distributions with $d^{\polylog(1/\e)}$ time and sample complexity.
Another recent result \cite{sos-subgaussian} shows that moment bounds of arbitrary sub-Gaussian distributions are certifiable in sum-of-squares. \cite{KS17} showed that it implies the guarantees similar to those of \cite{Bakshi} for all sub-Gaussian distributions, but in (weaker) relative spectral norm. Note that both results require \emph{quasi-polynomial} number of samples (and running time) for error $\tilde{O}(\e)$. 

Elliptical distributions admit much better estimation guarantees than sub-Gaussian or Poincaré distributions: Our estimator achieves nearly optimal error $\tilde{O}(\varepsilon)$, with a sample complexity that is nearly optimal, and runs in polynomial time.

\section{Techniques}
\label{sec:techniques}

Let us recall the assumptions that are used for nearly optimal computationally efficient robust covariance estimation in the Gaussian case (see section 5.2 of \cite{DiakonikolasKK016} or chapter 4 of \cite{DK_book} for the description of the algorithm). First, we need very strong concentration assumptions. In particular, $O(1)$-sub-Gaussianity is not enough, while $O(1)$-Hanson-Wright property is enough for the analysis to work (with $\tilde{O}(d^2/\e^2)$ samples). 
Second, we need the fourth moment of $\Sigma^{-1/2}x$ to coincide with the fourth moment of the standard Gaussian distribution. 

For elliptical distributions we do not have any assumptions on the moments (covariance or even mean might not even exist), and we have to rely only the elliptical structure. Fortunately, this structure allows us to reduce the problem of scatter matrix estimation to a problem of robust covariance estimation of some specific $O(1)$-sub-Gaussian distribution. The properties of this distribution depend on the spectrum of $\Sigma$, and, as we show, the closer $\Sigma$ is to $\Id$, the better (for the robust covariance estimation) these properties are. This allows us to estimate the covariance in several steps that we describe below. 

\paragraph{Spatial Sign.} The distribution reverenced above is the \emph{spatial sign} of an elliptical distribution. It is a projection of an elliptical vector with location zero\footnote{We can always reduce the problem to this case by considering $\paren{x_{i}-x_{\lfloor n/2\rfloor+i}}/\sqrt{2}$. The resulting samples also come from (an $O(\e)$-corruption of) an elliptical distribution with the same scatter matrix.} onto some (arbitrary) sphere centered at $0$. In this paper we use the sphere of radius $\sqrt{d}$, and denote the projection of vector $x\in\R^d$ onto this sphere by $\spsign(x)$. 

If $x$ is an elliptical vector with location $0$ and scatter matrix $\Sigma$, $\spsign(x)$ depends \emph{only} on $\Sigma$, and is the same for \emph{all} elliptical distributions with the same scatter matrix. Indeed, since  $x = \xi A U$ (as in Definition \ref{def:elliptical}), the projection onto the sphere satisfies
\[
 \spsign(x) = \frac{x}{\tfrac{1}{\sqrt{d}}\norm{x}} = \frac{\xi A U}{\tfrac{1}{\sqrt{d}}\norm{\xi A U}} = \frac{A U}{\tfrac{1}{\sqrt{d}}\norm{A U}}\,.
\]
So when we study the properties of $\spsign(x)$, we can simply assume that $x$ comes from our favorite elliptical distribution: $\cN(0,\Sigma)$.

The spatial sign was extensively studied in prior works on elliptical distributions, because $\Sigma' := \Cov_{x\sim \cN(0,\Sigma)} \spsign(x)$ has the same eigenvectors as $\Sigma$, and hence $\Sigma'$ is very useful for the principal (elliptical) component analysis.
However, the eigenvalues of $\Sigma'$ differ from the eigenvalues of $\Sigma$, so even in the classical setting (without corruptions), if we use the empirical covariance of the spatial sign to estimate the eigenvalues of $\Sigma$ (or $\Sigma$ itself), the error of the estimator is not vanishing, even if we take infinitely many samples. 
Fortunately, in robust statistics we anyway have a term that does not depend on the number of samples (it should be at least $\Omega(\e)$ in the Gaussian case). So if we prove that $\Sigma'$ is $\tilde{O}(\e)$-close to $\Sigma$ (in relative Frobenius or at least relative spectral norm), then good robust estimators of $\Sigma'$ are also good robust estimators of $\Sigma$. Since $\Tr(\Sigma') = d$, we fix the scale of $\Sigma$ so that $\Tr(\Sigma) = d$.

Let us discuss how to bound $\norm{{\Sigma}^{-1/2}\, \Sigma'\, {\Sigma}^{-1/2} - \Id}$ (then we also obtain the bound on relative Frobenius norm by multiplying the spectral norm bound by $\sqrt{d}$). Since the eigenvectors of $\Sigma$ and $\Sigma'$ are the same, we can work in the basis where both matrices are diagonal. It follows that
\[
\norm{{\Sigma}^{-1/2}\, \Sigma'\, {\Sigma}^{-1/2} - \Id} = \max_{i\in[d]}\Abs{\frac{\lambda'_i}{\lambda_{i}} - 1} = \max_{i\in[d]}\Abs{\E_{x\sim \cN(0,\Sigma)}{\frac{x_{j}^2/\lambda_j}{\tfrac{1}{d}\norm{x}^2} - 1}}
\]
where $\lambda_i = \Sigma_{ii}$, and $\lambda'_i = \Sigma'_{ii}$.
The Hanson-Wright inequality implies that if $\effrank(\Sigma) \gtrsim \log(d)$, then with high probability for all of the $n = \poly(d)$ samples $x_1,\ldots, x_n\simiid \cN(0,\Sigma)$, 
\[
\norm{x_i}^2 = d\cdot \Paren{1 \pm O\Paren{\sqrt{\frac{\log d}{\effrank(\Sigma)}}}}\,.
\]

Assuming that it holds with probability $1$\footnote{In the formal argument, we analyze not exactly the spatial sign, but some function that with high probability coincides with it on all of the samples, and for this function this statement is true.}, we get a bound
$\norm{{\Sigma}^{-1/2}\, \Sigma'\, {\Sigma}^{-1/2} - \Id} \le O\Paren{\sqrt{\frac{\log d}{\effrank(\Sigma)}}}$. While this bound can be useful for (non-robust) eigenvector estimation\footnote{It was used, in particular, in \cite{ECA}, and it is enough for consistent estimation of the eigenvectors of $\Sigma$.}, for our purposes it is too bad: Even if the effective rank is as large as possible ($\effrank(\Sigma) \ge \Omega(d)$), we only get error $O(\log(d))$ in relative Frobenius norm.

To get a better bound, we get rid of the denominator by expanding it as a series: 
\[
\E\frac{x_{j}^2/\lambda_j}{\tfrac{1}{d}\norm{x}^2} = 
\sum_{k=0}^{\infty} \E{\frac{x_j^2}{\lambda_j} \Paren{1 - \tfrac{1}{d}\norm{x}^2}^k}
=\sum_{k=0}^{\infty} \E{g_j^2 \Paren{1 - \tfrac{1}{d}\sum_{i=1}^{d}\lambda_i g_i^2}^k}
\,,
\]
where $g = \Sigma^{-1/2}x \sim \cN(0,\Id)$. The term that corresponds to $k=0$ is $1$. The term that corresponds to $k=1$ is
\[
\E g_j^2\paren{1 - \tfrac{1}{d}\sum_{i=1}^{d}\lambda_i g_i^2} = 
1 - \tfrac{1}{d}\sum_{i\neq j} \lambda_i - \frac{3\lambda_j}{d} =
1 - \tfrac{1}{d}\sum_{i=1}^d \lambda_i - \frac{2\lambda_j}{d}
= \frac{2\lambda_j}{d} \le O\Paren{\frac{1}{\effrank(\Sigma)}}\,,
\]
where we used $\sum_{i=1}^d \lambda_i = \Tr(\Sigma) = d$. Similarly, the term that corresponds to $k = 2$ is also $O\Paren{\frac{1}{\effrank(\Sigma)}}$, and the other terms are much smaller. Hence we get a bound
\[
\norm{{\Sigma}^{-1/2}\, \Sigma'\, {\Sigma}^{-1/2} - \Id} \le O\Paren{\frac{1}{\effrank(\Sigma)}}\,.
\]
In the case $\effrank(\Sigma)\ge\Omega(d)$, we get $O(1/d)$ error in relative spectral norm, and $O(1/\sqrt{d})$ error in relative Frobenius norm. This is the reason why we require the lower bounds on $\e$ in \cref{thm:main}: We want to make these errors smaller than the robust estimation error $\tilde{O}(\e)$. However, if the effective rank is small, this error is still too large: For example, the error bound in relative Frobenius norm is $\Omega(1)$ if $\effrank(\Sigma) \le \sqrt{d}$. 

Furthermore, if $\effrank(\Sigma) \le o(d)$, the $\paren{d^2\times d^2}$-dimensional covariance of the $d^2$-dimensional random variable $\Paren{\Sigma'}^{-1/2}\spsign(x)\spsign(x)^\top\Paren{\Sigma'}^{-1/2}$ might not even be bounded by $O(1)$ (in spectral norm), which makes robust estimation of $\Sigma'$ with dimension-independent error in  relative Frobenius norm challenging, even if we did not aim to achieve nearly optimal error.

In order to fix these issues, first we estimate $\Sigma'$ up to some small constant error in relative spectral norm. For this, we split the sample into several (more precisely, 3) sub-samples, and for each new estimation we use fresh samples.

\paragraph{First Estimation.} It is not difficult to see that as long as $\effrank(\Sigma)\gtrsim \log(d)$, then $\spsign(x)$ is $O(1)$-sub-Gaussian. By recent result \cite{sos-subgaussian}, the bound on its fourth moment can be certified via a degree-$4$ sum-of-squares proof\footnote{This fact can be also easily verified directly without referring to \cite{sos-subgaussian}.}. Hence we can use the sum-of-squares algorithm from \cite{KS17} that estimates $\Sigma'$ in relative spectral norm up to error $O(\sqrt{\e})$ (this algorithm works if we use $n\gtrsim d^2\log^2(d)/\e^2$ samples). Since $\Sigma'$ is close to $\Sigma$, with high probability we get an estimator $\hat{\Sigma}_1$ such that 
\[
\norm{\hat{\Sigma}_1^{-1/2} \Sigma \hat{\Sigma}_1^{-1/2} - \Id}  \le O\paren{\sqrt{{\e}} + \frac{1}{\effrank(\Sigma)}}\,.
\]
In particular, $0.9 \cdot \Id \preceq \hat{\Sigma}_1^{-1}\Sigma \preceq 1.1\cdot \Id$. 
Hence if we multiply the next subsample by $\hat{\Sigma}_1^{-1}$, we get ($O(\e)$-corruption of) samples from an elliptical distribution with new scatter matrix $\tilde{\Sigma} = \rho \hat{\Sigma}_1^{-1} \Sigma$, where $\rho = d/\Tr(\hat{\Sigma}_1^{-1}\Sigma)$. 

So we can assume that we work with (corrupted) samples from an elliptical  distribution $\cD$ with scatter matrix $\Sigma$ such that $0.9 \cdot \Id \preceq \Sigma \preceq 1.1\cdot \Id$. If we fix the scale $\Tr(\Sigma) = d$, then
$\Sigma' = \Cov_{x\sim \cD}\spsign(x)$ is $O(1/d)$-close to $\Sigma$ in relative spectral norm (since $\effrank(\tilde{\Sigma}) \ge \Omega(d)$).

Now we can try to estimate $\Sigma'$ up to error $\tilde{O}(\e)$. As was previously mentioned, the algorithm for the Gaussian distribution from \cite{DiakonikolasKK016} requires strong assumptions: Hanson-Wright concentration, and the Gaussian fourth moment. 
Fortunately, as long as $\effrank(\Sigma)\ge \Omega(d)$, $\spsign(x)$ satisfies the $O(1)$-Hanson-Wright property. Indeed, it is not hard to see that with overwhelming probability $\Paren{\Sigma'}^{-1/2}\spsign(x) = \Paren{\Sigma'}^{-1/2}\spsign(\Sigma^{1/2} g)$ (where $g\sim \cN(0,\Id)$) coincides with some $O(1)$-Lipschitz function of $g$ (concretely, a composition of linear transformations with with bounded spectral norm, and a function that projects onto the sphere only the points that are close to it, and is linear otherwise). It is known that\footnote{See, for example, \cite{log-sobolev-are-hanson-wright}.}  any $O(1)$-Lipschitz function of a standard Gaussian vector satisfies the $O(1)$-Hanson-Wright property. Therefore, we do not have to worry about the concentration, and can focus on dealing with the fourth moment.

Recall that the covariance filtering algorithm with nearly optimal error uses the $(d^2\times d^2)$-dimensional covariance of the distribution in the isotropic position: $T=\E\Paren{\Cov(y)^{-1/2} yy^\top\Cov(y)^{-1/2} - \Id_d}^{\otimes 2}$. If $y\sim \cN(0,\Sigma)$, then $Q = 2\Id_{d^2}$. However, in our case $y = \spsign(x)$, and the situation is more complicated. This covariance is not only different from $2\Id_{d^2}$, it is also unknown to us, since it depends on $\Sigma$.
We study this dependence and show that the entries $T_{ijij}$ and $T_{iijj}$ are $O\Paren{\Norm{\Sigma - \Id_d}/d + \tilde{O}(1/d^2)}$-close to the entries of  $S := \E\Paren{gg^\top / \norm{g}^2 -\Id_d}^{\otimes 2}$ (and the other entries are zero for both of them). While at the first glance $T$ and $S$ seem to be very close, it is in fact not true: their values (as quadratic forms on unit vectors in $\R^{d^2}$) can differ by $O(\Norm{\Sigma - \Id_d})$. Even if we again use the algorithm from \cite{KS17} and guarantee that $\norm{\Sigma - \Id_d}\le O(\sqrt{\e})$, this bound is only $O(\sqrt{\e})$, while we need it to be $\tilde{O}(\e)$. 
Hence we have to somehow estimate $\Sigma'$ up to error $\tilde{O}(\e)$ in \emph{spectral} norm, before estimating it in  Frobenius norm. 

\paragraph{Second Estimation.}
Let us first describe why the difference between the values of $T$ and $S$ on unit $V\in\R^{d^2}$ can be $O(\Norm{\Sigma - \Id_d})$. The reason is the terms $V_{ii}V_{jj} \Paren{T_{iijj} - S_{iijj}}$ for $i\neq j$. Since $\sum_{i=1}^d V_{ii}$ can be as large as $\sqrt{d}$, the sum of $V_{ii}V_{jj} \Paren{T_{iijj} - S_{iijj}}$ can be as large as $d \cdot \normi{S-T} = O(\Norm{\Sigma - \Id_d})$. 

However, if we only consider unit $d^2$-dimensional vectors of the form $V = uu^\top$, this issue disappears. Indeed, $\sum_{i=1}^d V_{ii}$ is now bounded by $1$, and $T$ is $O(1/d)$-close to $S$ on such vectors. Furthermore, they are both $O(1/d)$-close (as quadratic forms on unit $d^2$-dimensional vectors of the form $uu^\top$) to $2\Id_{d^2}$.

In fact, if we use the covariance filtering algorithm only for such vectors, we get the desired spectral norm bound. However, the optimization problem involved is computationally hard, since we need to optimize over the set $\Set{u^{\otimes 4} \suchthat \norm{u} = 1}$.
Fortunately, we can work with the (canonical) \emph{sum-of-squares relaxation} of this set, that is, with the set of degree-$4$ pseudo-expectations of $v^{\otimes 4}$ that satisfy the constraint $\norm{u}^2 =  1$. 

In order to show that this approach works, we introduce a generalized notion of stability that we use not only for estimation in spectral norm, but also in Frobenius form at the final step. 

\begin{definition}[Generalized Stability]\label{def:stability}
    Let $m\in \N$, $\cV \subseteq \R^m$, $\cP \subseteq \R^{m\times m}$ such that $\cV \otimes \cV \subseteq \cP$, $\mu\in \R^m$ and $Q\in \R^{m\times m}$. Let $\e, \delta, r > 0$ such that $\delta \ge \e$. 
    
    A finite multiset $M$ of points from $\R^m$ is \emph{$\Paren{\e,\delta,r,\cV,\cP}$-stable with respect to $\mu$ and $Q$} if for every $v\in \cV$, every $P\in \cP$, and every $M'\subseteq M$ with $\Card{M'}\ge \paren{1-\e}\Card{M}$, the following three conditions hold:
    \begin{enumerate}
        \item $\Abs{\tfrac{1}{\card{M'}} \sum_{x\in M'} \iprod{v, x-\mu}}\le \delta$,
        \item $\Abs{\tfrac{1}{\card{M'}} \sum_{x\in M'} \iprod{P, \paren{x-\mu}\paren{x-\mu}^\top} - \iprod{P, Q}}\le \delta^2/\e$, and
        \item $\Abs{\iprod{P, Q}} \le r^2$.
    \end{enumerate}
\end{definition}

Note that we use it with $m = d^2$ and apply it to the set $S$ of samples $y_1y_1^\top,\ldots, y_ny_n^\top$, where $y = \spsign(x)$.
For estimation in spectral norm, $\cV$ is $\Set{uu^\top \suchthat \norm{v} = 1}$, $\cP$ is the set of $\pE u^{\otimes 4}$ described above, $Q = 2\Id_{d^2}$, and $r = \sqrt{2}$ (and, as in the standard filtering algorithm, $\delta = O(\e\log(1/\e)$). 

As previously mentioned, if an isotropic distribution $\cY$ satisfies the Hanson-Wright property, then it can be shown that the set of iid samples drawn from this distribution is $(\e,O(\e\log(1/\e)), O(1), \cB, \Conv(\cB\otimes \cB))$-stable with respect tp $\mu = \Id_d$ and $Q = \E_{y\sim \cY}\Paren{yy^\top - \Id_d}^{\otimes 2}$ with high probability, where $\cB = \{V\in \R^{d^2} : \normf{V} = 1\}$. Since our $\cV$ is a subset of $\cB$, and our $\cP$ is a subset of $\Conv(\cB\otimes \cB)$, we also get the stability for $\cV$ and $\cP$.

The filtering algorithm works in a similar way to the Gaussian case: It assigns weights to the samples, in the case of the covariance estimation transforms them\footnote{For the covariance filtering the samples are transformed via linear transformation $y_iy_i^\top \mapsto C^{-1/2}y_iy_i^\top C^{-1/2}$, where $C$ is the current candidate for the covariance estimator.}, and checks whether the value $\lambda := \max_{P\in\cP}\Abs{\iprod{P, Q-\hat{Q}^{(t)}}}$  is too large
(where $\hat{Q}^{(t)}$ is the weighted empirical $m\times m$-dimensional covariance of the (transformed) samples). 
In particular, it requires $Q$ (but, of course, not $\mu$) to be known.
If $\lambda$ is large, it reassigns the weights, so that the weights of the samples that made it large decrease. In the end $\lambda$ should be small, and if it is small, then the stability guarantees that the current weighted sample mean is close to $\mu$.
Note that since we can optimize linear functions over $\cP$ efficiently, the algorithm runs in polynomial time.

This notion of stability, however, is not enough for the filtering algorithm to work. Apart from the condition that  the set of samples is stable at each iteration\footnote{The transformed set needs to be stable for certain $\delta' > \delta$. This condition also appears in the Gaussian setting, and for our case can be shown in a similar way.} of the algorithm, we need some additional conditions that $\cP$ is in certain sense not much larger than $\cV\otimes \cV$. Concretely, for all $A,B\in \R^{d^2}$ and $P\in \cP$, we need to show that
\[
\Iprod{A\otimes B, P}^2 \le \Paren{\sup_{V\in \cV}\Iprod{A, V}^2}\cdot\Paren{\sup_{V\in \cV}\Iprod{B, V}^2}\,.
\]
and $\Iprod{A\otimes A, P} \ge 0$.
For $P = v^{\otimes 4}$, it is possible to show that these inequalities can be certified via degree-$4$ sum-of-squares proofs, so $\cP$ satisfies the desired properties. 

Thus, we show that the filtering algorithm for spectral covariance estimation finds $\hat{\Sigma}_2$ such that 
$\norm{\hat{\Sigma}_2^{-1/2} \Sigma \hat{\Sigma}_2^{-1/2} - \Id} \le O\Paren{\e\log(1/\e)}$. After we multiply fresh samples by $\hat{\Sigma}_2^{-1/2}$, we can assume that we work with a sample from an elliptical distribution with scatter matrix $\Sigma$ such that $\Norm{\Sigma - \Id} \le O(\e\log(1/\e))$.

\paragraph{Final Estimation.}
To estimate $\Sigma$ in Frobenius norm, we again use the stability-based filtering. We use it with $Q = S$ (recall that $S := \E\Paren{gg^\top / \norm{g}^2 -\Id_d}^{\otimes 2}$).
Since $\Sigma$ is very close to $\Id_d$, the situation is simpler than before: we show the stability of the \emph{non-transformed} samples (with the same $\delta = O(\e\log(1/\e)$), and then simply apply filtering without transformations (in other words, this filtering algorithm is oblivious to the matrix structure in $\R^{d^2}$).  This algorithm finds $\hat{\Sigma}_3$ such that $\normf{\hat{\Sigma}_3^{-1/2} \Sigma \hat{\Sigma}_3^{-1/2} - \Id} \le O\Paren{\e\log(1/\e)}$. 

Now let us use the notation $\Sigma$ for the original scatter matrix with $\Tr(\Sigma) = d$, 
$\Sigma_1 = \rho_1\hat{\Sigma}_1^{-1/2}\Sigma\hat{\Sigma}_1^{-1/2}$, 
$\Sigma_2 = \rho_2\hat{\Sigma}_2^{-1/2}\Sigma_1\hat{\Sigma}_2^{-1/2}$, where $\rho_1$ and $\rho_2$ are chosen so that $\Tr(\Sigma_1) = \Tr(\Sigma_2) = d$. Then $\hat{\Sigma} := \hat{\Sigma}_1 \hat{\Sigma}_2 \hat{\Sigma}_3$
satisfies
\[
\normf{\hat{\Sigma}^{-1/2} \Paren{\rho_1\rho_2\Sigma} \hat{\Sigma}^{-1/2}  - \Id}\le O(\e\log(1/\e))\,.
\]
So $\hat{\Sigma}$ is the desired estimator of the scatter matrix $\rho_1\rho_2\Sigma$.


\section{Related Work and Future Directions}

\subsection{Related Work}
We refer to \cite{DK_book} for a survey on the area of high-dimensional robust statistics. The following is a non-exhaustive selection of related work.

The filtering algorithm was proposed in \cite{DiakonikolasKK016}, and was especially useful in the Gaussian setting for various robust estimation problems. Sum-of-squares was first used in robust statistics in \cite{KS17} and \cite{HL18}. Apart from robust covariance estimation, sum-of-squares approach in robust statistics was used for robust mean estimation \cite{KS17, pmlr-v167-kothari22a, robust-sparse-mean-estimation, neurips-symmetric}, robust regression \cite{KlivansKM18, robust-regression-bakshi, liu2024robust}, robust sparse and tensor PCA \cite{d2023higher}, robust clustering \cite{HL18, BDH+20, BDH+22, dmitriev2024robust}.


\subsection{Future Directions}

As was mentioned before, one promising direction is optimizing the sample complexity of robust PCA for elliptical distributions. 

Another promising direction is to determine whether it is possible to estimate the \emph{location} parameter with $\poly(d)$ samples in time $\poly(d)$ up to error $\tilde{O}(\e)$. For spherically symmetric distributions it was done in \cite{neurips-symmetric}. However, it is not clear how to use our result in their analysis. There are several things that do not work with their analysis if the distribution is not exactly spherically symmetric. As an example, as a part of their analysis, they show a lower bound for all small enough vectors $\Delta$:
\[
\sum_{i\in J} f(x_i + \Delta)f(x_i + \Delta)^\top \succeq \E f(x)f(x) - \tilde{O}(\e)\Id_d\,,
\]
where $f_j: x_{ij}\mapsto f_j(x_{ij})$ are bounded  Lipschitz functions (applied entry-wise), and $J$ is a subset of $[n]$ of size at least $(1-\e)n$.
For spherically symmetric $x$, it is possible to condition on the absolute values, and use the fact that the signs are independent of absolute values and are iid. It implies that that the conditional distribution is sub-Gaussian, and it is enough to derive this bound. However, if $x$ is elliptical and not exactly spherically symmetric, this argument does not work. 

Although it is not clear to us how to extend the analysis to the non-spherical setting, we conjecture that nearly optimal and efficient\footnote{\cite{chen2018robust} showed that \emph{Tukey's median} estimator (computable in exponential time) $O(\e)$-close to the location (in Huber's contamination model).} estimation of the location parameter should be possible.

\phantomsection
\addcontentsline{toc}{section}{References}
\bibliographystyle{amsalpha}
\bibliography{bib/custom,bib/dblp,bib/mathreview,bib/scholar}

\newpage
\appendix

\section{First Estimation}
First, if we have more samples than needed, we simply take only the necessary number of samples (choosing them at random) and do not use the other samples. We need this step to make sure that $n=\poly(d)$, and small probability events do not happen with any of the (uncorrupted) samples.
For simplicity, further we denote by $n$ the number of samples that the algorithm actually uses.

For $i \in [\lfloor n / 2 \rfloor]$, let $z'_i = \paren{z_i - z_{\lfloor n / 2 \rfloor + i}}/\sqrt{2}$. Then, by Theorem 4.1 from \cite{frahm2004generalized}, $z'$ is a $4\e$-corruption of the sample $x'_1,\ldots x'_{\lfloor n/2 \rfloor} \simiid \cD'$, where $\cD'$ is an elliptical distribution with location $0$ and scatter matrix $\Sigma$. To simplify the notation, we assume that this preprocessing step has been done, and the input $z_1,\ldots,z_n$ is an $\e$-corruption of an iid sample from an elliptical distribution with location $0$ and scatter matrix $\Sigma$.

Then we project $z_1,\ldots, z_n$ onto the sphere of radius $\sqrt{d}$. 
We can assume that the projected samples are $\spsign(y'_1),\ldots\spsign(y'_n)$, where $y'_1,\ldots y'_n$ is an $\e$-corruption of $y_1,\ldots, y_n\simiid \cN(0,\Sigma)$ (and $\spsign$ is the projection). 
Even though $\spsign(y)$ has nice properties, for technical reasons it is  not convenient to analyze it. Instead, we analyze the function $\cF:\R^d \to \R^d$ defined as
    \[
    \cF(x) = 
    \begin{cases}
        \spsign(x) = {\sqrt{d}} \cdot x/{\norm{x}} &\text{if } 0.9d \le \norm{x}^2 \le 1.1d\,,\\
        \sqrt{10/9}\cdot x &\text{if } \norm{x}^2 < 0.9d\,,\\
        \sqrt{10/11} \cdot x &\text{if } \norm{x}^2 > 1.1d\,.
    \end{cases}
    \]

By Hanson-Wright inequalities (\cref{prop:Hanson-Wright}), if $\effrank(\Sigma) \gtrsim \log(d)$, then with high probability for all $i\in[n]$, $\cF(y_i) = \spsign(y_i)$. 
Hence we can assume that we are given an $\e$-corruption of iid samples that are distributied as $\cF(y)$, where $y\sim \cN(0,\Sigma)$. We call these corrupted samples $\zeta_1,\ldots, \zeta_n$, and true samples $\cF(y_1),\ldots , \cF(y_n)$.

The following lemma shows that the covariance of $\cF(y)$ is close to $\Sigma$.

\begin{lemma}\label{lem:closeness}
    Let $\Sigma \in \R^{d\times d}$ be a positive semidefinite matrix such that $\Tr(\Sigma) = d$ and $\effrank(\Sigma) \gtrsim \log(d)$.
    
    Then 
    \[
    \Norm{\Sigma^{-1/2}\,
    \Paren{\Cov_{y \sim \cN(0,\Sigma)}{\cF(y)} - \Sigma}
    \,\Sigma^{-1/2}}  \le O\Paren{\frac{1}{\effrank(\Sigma)}}\,.
    \]

\end{lemma}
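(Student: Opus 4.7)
The plan is to exploit the rotational symmetry of $\cN(0,\Sigma)$ to diagonalize the problem, and then expand the reciprocal $1/\norm{y}^2$ via a short algebraic remainder identity on the ``good'' event $A = \{d/2 \le \norm{y}^2 \le 3d/2\}$. Throughout we use the normalization $\Tr(\Sigma) = d$ fixed in \cref{sec:techniques}, so that $\effrank(\Sigma) = d/\norm{\Sigma}$.

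First, since $\cF$ acts by a scalar rescaling of its argument, for any orthogonal $O$ commuting with $\Sigma$ we have $\cF(Oy) = O\cF(y)$, and $Oy$ has the same law as $y$. Hence $\Sigma' := \Cov \cF(y)$ commutes with $\Sigma$, and both matrices are simultaneously diagonal in a common eigenbasis. In that basis, writing $\Sigma = \diag(\lambda_1,\ldots,\lambda_d)$ and $y = \Sigma^{1/2}g$ with $g \sim \cN(0,\Id)$, the matrix $\Sigma^{-1/2}(\Sigma' - \Sigma)\Sigma^{-1/2}$ is diagonal with entries $\lambda'_j/\lambda_j - 1$, where $\lambda'_j = \E\cF(y)_j^2$. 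It therefore suffices to bound $|\lambda'_j/\lambda_j - 1|$ by $O(1/\effrank(\Sigma))$ for every $j$. Put $S = \norm{y}^2/d = \tfrac{1}{d}\sum_i \lambda_i g_i^2$, so that $\E S = 1$ and $A = \{|1-S|\le 1/2\}$.

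Next we split $\lambda'_j = \E[\cF(y)_j^2 \mathbf{1}_A] + \E[\cF(y)_j^2 \mathbf{1}_{A^c}]$. Applying \cref{prop:Hanson-Wright} with $t \asymp \effrank(\Sigma)$ and $\normf{\Sigma}^2 \le \norm{\Sigma}\Tr(\Sigma) = d\norm{\Sigma}$ gives $\Pr[A^c] \le e^{-c\,\effrank(\Sigma)}$ for some absolute $c>0$; combined with $|\cF(y)_j| \le \sqrt{10/9}\,|y_j|$ on the tails and Cauchy--Schwarz, this yields $\E[\cF(y)_j^2 \mathbf{1}_{A^c}] \le O(\lambda_j e^{-c\,\effrank(\Sigma)/2})$, which is $o(\lambda_j/\effrank(\Sigma))$ under $\effrank(\Sigma) \gtrsim \log d$. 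On $A$ we have $\cF(y)_j^2 = y_j^2/S$, and the algebraic identity $1/S = 1 + (1-S) + (1-S)^2/S$ (checked by multiplying through) gives
\[
\E\bigbrac{\cF(y)_j^2 \mathbf{1}_A} - \lambda_j \E \mathbf{1}_A = \E\bigbrac{y_j^2 (1-S) \mathbf{1}_A} + \E\bigbrac{y_j^2 (1-S)^2/S \cdot \mathbf{1}_A}.
\]
A direct Gaussian-moment computation using $\E[g_j^2 g_i^2] = 1 + 2\delta_{ij}$ yields $\E[y_j^2 (1-S)] = -2\lambda_j^2/d$, whose absolute value is at most $2\lambda_j/\effrank(\Sigma)$; subtracting the corresponding $A^c$-restricted integral (exponentially small by Cauchy--Schwarz) gives the same bound for $\E[y_j^2 (1-S)\mathbf{1}_A]$. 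For the second integral, $1/S \le 2$ on $A$ bounds it by $2\E[y_j^2 (1-S)^2]$; expanding $1-S = -\tfrac{1}{d}\sum_i \lambda_i(g_i^2-1)$ and computing directly gives $\E[y_j^2(1-S)^2] = (\lambda_j/d^2)\bigparen{8\lambda_j^2 + 2\normf{\Sigma}^2}$, which is again $O(\lambda_j/\effrank(\Sigma))$ by $\normf{\Sigma}^2 \le d\norm{\Sigma}$ and $\lambda_j \le \norm{\Sigma}$. Summing the three contributions yields $|\lambda'_j - \lambda_j| = O(\lambda_j/\effrank(\Sigma))$, proving the lemma.

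The main obstacle is matching the order $1/\effrank(\Sigma)$ rather than $1/\sqrt{\effrank(\Sigma)}$. A naive Cauchy--Schwarz applied to $\E[y_j^2 (1-S)\mathbf{1}_A]$ together with $\Var(S) = 2\normf{\Sigma}^2/d^2 \le 2/\effrank(\Sigma)$ yields only $O(\lambda_j/\sqrt{\effrank(\Sigma)})$. The saving comes from computing the first-order term exactly: the cancellation between $\sum_{i\ne j}\lambda_i$ and $\Tr(\Sigma) = d$ in the expansion of $\E[y_j^2 S]$ produces an additional factor of $\lambda_j/d$. Isolating the quadratic remainder via the \emph{finite} identity $1/S = 1 + (1-S) + (1-S)^2/S$, rather than a full power-series expansion (which would fail pointwise outside $A$), is what keeps that residual at the desired order.
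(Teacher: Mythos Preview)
Your argument is correct and rests on the same key insight as the paper's: diagonalize simultaneously, reduce to bounding $|\lambda'_j/\lambda_j-1|$, expand the reciprocal of $\norm{y}^2/d$, and exploit the exact cancellation in the first-order term (which gives $O(\lambda_j/d)$ rather than the $O(1/\sqrt{\effrank})$ one would get from Cauchy--Schwarz). The execution differs in two places. First, the paper absorbs the truncation into the denominator by defining $q=1-R(g)^2$ so that $|q|\le 1/10$ deterministically, and then sums the full geometric series $\sum_k q^k$; you instead split off $A^c$ explicitly and on $A$ use the finite remainder identity $1/S = 1+(1-S)+(1-S)^2/S$, which is arguably tidier since it avoids tail-summing an infinite series. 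Second, for the first-order term the paper invokes a Monte-Carlo coupling (sampling $N=d^{1000}$ copies and comparing sample means) to transfer the computation from $q$ to $1-\tfrac{1}{d}\sum_i\lambda_i g_i^2$, whereas you compute $\E[y_j^2(1-S)]=-2\lambda_j^2/d$ directly from $\E[g_j^2 g_i^2]=1+2\delta_{ij}$, which is more transparent. One small slip: your displayed identity should have $\E[y_j^2\,\mathbf 1_A]$ on the left rather than $\lambda_j\E\mathbf 1_A$, since $y_j^2$ and $\mathbf 1_A$ are not independent; the discrepancy $\E[y_j^2\mathbf 1_{A^c}]-\lambda_j\Pr[A^c]$ is again exponentially small by the same Cauchy--Schwarz/Hanson--Wright argument, so the conclusion is unaffected.
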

\begin{proof}
    Observe that $\Sigma' := \Cov_{y \sim \cN(0,\Sigma)}\brac{\cF(y)}$ has the same eigenvectors as $\Sigma$. Indeed, in the basis where $\Sigma$ is diagonal, 
    \[
    \Sigma'_{ij}= \E\Brac{\frac{\sqrt{\lambda_i\lambda_j} \cdot g_ig_j}{\tfrac{1}{d}\sum_{i=1}^d \lambda_i g_i^2 \cdot \ind{\sum_{i=1}^d \lambda_i g_i^2\in[0.9d,1.1 d]} 
    + \tfrac{9}{10} \ind{\sum_{i=1}^d \lambda_i g_i^2 < 0.9d} 
    + \tfrac{11}{10} \ind{\sum_{i=1}^d \lambda_i g_i^2 > 1.1d}
    }} 
    \,,
    \]
    where $\lambda_1,\ldots, \lambda_d$ are the eigenvalues of $\Sigma$. Since the signs of $g_i$ and $g_j$ are independent and are independent of all $g_1^2,\ldots, g_d^2$, $\Sigma'_{ij} = 0$ if $i\neq j$.
    
    Hence if we take these eigenvectors as the basis, the matrices $\Sigma^{-1/2}$, $\Sigma'$ and $\Sigma$ become diagonal, and the value of the relative spectral norm does not change. In this basis, the eigenvalues of $\Sigma'$ are its diagonal entries.
    
    Note that
       \[
     \Norm{\Sigma^{-1/2}\,
    \Paren{\Sigma' - \Sigma}
    \,\Sigma^{-1/2}} = \max_{j\in [d]} \Abs{\frac{\lambda_j'}{\lambda_j} - 1} \,.
    \]

    Consider 
    \[
    q = 1 - \Paren{\tfrac{1}{d}\sum_{i=1}^d \lambda_i g_i^2 \cdot \ind{\sum_{i=1}^d \lambda_i g_i^2\in[0.9d,1.1 d]} 
    + \tfrac{9}{10} \ind{\sum_{i=1}^d \lambda_i g_i^2 < 0.9d} 
    + \tfrac{11}{10} \ind{\sum_{i=1}^d \lambda_i g_i^2 > 1.1d}}\,.
    \]
    Note that since $\abs{q} \le 1/10$, $\frac{1}{1 - q} = \sum_{k=0}^{\infty} q^k$.
    Therefore,
    \[
    \frac{\lambda'_i}{\lambda_i}
    =
    \sum_{k=0}^{\infty} \E{g_j^2 q^k} 
    = \E{g_j^2}+ \E\Brac{g_j^2 q} +  \E \Brac{g_j^2 q^2}  + \sum_{k=3}^{\infty} \E \Brac{g_j^2 q^k}\,.
    \]
    The first term is $1$. 
    Let us bound the second term. Let $N=d^{1000}$, and for all $i\in[d]$ consider $g_{1i},\ldots,g_{Ni}\simiid \cN(0,1)$. By Hanson-Wright inequalities (Proposition \ref{prop:Hanson-Wright}) with high probability, for all $m \in [N]$,  
    $\tfrac{1}{d}\sum_{i=1}^d \lambda_i (1-g_{mi}^2) \in [0.9,1.1]$. 
    Also, with high probability, the sample average of 
    $g_{mj}^2 \cdot \tfrac{1}{d}\sum_{i=1}^d \lambda_i (1-g_{mi}^2)$
    is $d^{-10}$-close to both $\E\Brac{g_j^2 q}$ and $\E g_j^2 \cdot\tfrac{1}{d}\sum_{i=1}^d \lambda_i (1-g_i^2) = \frac{2\lambda_j}{d}$ (since both random variables $g_j^2 q$ and $g_j^2 \cdot\tfrac{1}{d}\sum_{i=1}^d \lambda_i (1-g_i^2)$ have variance $O(1)$). Therefore, the second term is bounded by $\frac{2\lambda_j}{d} + 2d^{-10} \le \frac{3}{\effrank(\Sigma)}$ in absolute value.
    
    Since $\abs{q} \le \abs{1-\tfrac{1}{d}\sum_{i=1}^d \lambda_i g_i^2}$,
    \[
    \E \Brac{g_j^2 q^2} \le 
    \tfrac{1}{d^2}\sum_{i=1}^d \lambda_i^2 \E g_j^2 (1-g_i^2)^2
    \le \tfrac{20}{d^2} \sum_{i=1}^d \lambda_j^2
    \le \tfrac{20}{d}\Norm{\Sigma} 
    \le \tfrac{20}{\effrank(\Sigma)}\,.
    \]
    
    Now let us bound the fourth term. Since $\abs{q} \le 1/10$,
    \[
    \Abs{\sum_{k=3}^{\infty} \E \Brac{g_j^2 q^k}} \le \E \Brac{g_j^2 q^2} 
    \le \frac{20} {\effrank(\Sigma)}\,.
    \]

    Hence for all $i\in[d]$, $\frac{\lambda'_i}{\lambda_i} = 1 + O(\frac{1}{\effrank(\Sigma)})$.
\end{proof}

The following proposition shows that a good estimator of $\Sigma'$ is also a good estimator of $\Sigma$.

\begin{proposition}\label{prop:triangle-inequality}
Let $\Sigma, \Sigma', \hat{\Sigma}\in \R^{d\times d}$ 
be positive definite matrices such that for some $\Delta \le 1$,
\[
\normf{{\Sigma}^{-1/2}\, \Sigma'\, {\Sigma}^{-1/2} - \Id} \le \Delta
\]
and
\[
\normf{\Paren{\Sigma'}^{-1/2}\, \hat{\Sigma}\, \Paren{\Sigma'}^{-1/2} - \Id} \le \Delta\,.
\]
Then
\[
\normf{{\Sigma}^{-1/2}\, \hat{\Sigma}\, {\Sigma}^{-1/2} - \Id} \le 3\Delta\,.
\]
The same is true if Frobenius norm is replaced by spectral norm.
\end{proposition}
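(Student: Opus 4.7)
The plan is a short triangle-inequality argument after rewriting the target in a convenient conjugated form. Define the symmetric matrices $A := \Sigma^{-1/2}\Sigma'\Sigma^{-1/2}$ and $B := (\Sigma')^{-1/2}\hat\Sigma(\Sigma')^{-1/2}$, and the (non-symmetric) matrix $C := \Sigma^{-1/2}(\Sigma')^{1/2}$. The two key algebraic identities are $CC^\top = \Sigma^{-1/2}\Sigma'\Sigma^{-1/2} = A$ and, after substituting $\hat\Sigma = (\Sigma')^{1/2}B(\Sigma')^{1/2}$, $CBC^\top = \Sigma^{-1/2}\hat\Sigma\Sigma^{-1/2}$. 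Adding and subtracting $CC^\top$ yields the decomposition
\[
\Sigma^{-1/2}\hat\Sigma\,\Sigma^{-1/2} - \Id \;=\; C(B - \Id)C^\top \;+\; (A - \Id).
\]

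From this point the bound is routine. Applying the triangle inequality in Frobenius norm, the second summand contributes at most $\Delta$ by hypothesis. For the first summand, the submultiplicative inequality $\normf{XYZ}\le\norm{X}\normf{Y}\norm{Z}$ gives $\normf{C(B-\Id)C^\top}\le\norm{C}^2\normf{B - \Id}\le\norm{C}^2\Delta$. The spectral norm of $C$ is controlled by the first hypothesis: $\norm{C}^2 = \norm{CC^\top} = \norm{A} \le 1 + \norm{A - \Id} \le 1 + \normf{A - \Id} \le 1 + \Delta \le 2$, where the last step uses $\Delta \le 1$. Combining the two contributions gives $\normf{\Sigma^{-1/2}\hat\Sigma\,\Sigma^{-1/2} - \Id} \le 2\Delta + \Delta = 3\Delta$, which is at least as strong as the stated spectral-norm bound since $\norm{\cdot}\le\normf{\cdot}$.

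There is no substantive obstacle: the only care needed is that $C$ is not symmetric, so the correct identity is $CC^\top = A$ rather than $C^2 = A$, and one must upgrade the Frobenius hypothesis $\normf{A-\Id}\le\Delta$ to a spectral bound on $A$ (via $\norm{\cdot}\le\normf{\cdot}$) before pulling $\norm{C}^2$ out of the conjugated Frobenius norm. The constant $3$ in the conclusion comes exactly from $\norm{A}\le 1+\Delta\le 2$ plus the additive $\Delta$ from $\normf{A-\Id}$.
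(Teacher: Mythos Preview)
Your proof is correct and follows essentially the same route as the paper: both arguments split $\Sigma^{-1/2}\hat{\Sigma}\,\Sigma^{-1/2} - \Id$ via the intermediate $\Sigma^{-1/2}\Sigma'\Sigma^{-1/2}$, factor the first piece as $C(B-\Id)C^\top$ with $C=\Sigma^{-1/2}(\Sigma')^{1/2}$, and bound $\norm{C}^2=\norm{CC^\top}\le 1+\Delta\le 2$ using the first hypothesis. Your observation that the argument actually yields the Frobenius-norm bound (and hence the stated spectral bound) is correct and matches what the paper's proof in fact establishes.
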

\begin{proof}
\begin{align*}
\normf{{\Sigma}^{-1/2}\, \hat{\Sigma}\, {\Sigma}^{-1/2} - \Id} 
&\le
\normf{{\Sigma}^{-1/2}\, \hat{\Sigma}\, {\Sigma}^{-1/2} -{\Sigma}^{-1/2}\, \Sigma'\, {\Sigma}^{-1/2}} +\normf{{\Sigma}^{-1/2}\, \Sigma'\, {\Sigma}^{-1/2} - \Id}
\\&\le \norm{\Sigma^{-1/2}\Paren{\Sigma'}^{1/2}}^2 \cdot \normf{\Paren{\Sigma'}^{-1/2}\, \hat{\Sigma}\, \Paren{\Sigma'}^{-1/2} - \Id} + \Delta
\\&\le \norm{\Sigma^{-1/2}\Paren{\Sigma'}^{1/2}}^2\cdot \Delta + \Delta\,.
\end{align*}
Denote $A = \Sigma^{-1/2}\Paren{\Sigma'}^{1/2}$. Since $\norm{AA^\top - \Id} \le \normf{AA^\top - \Id} \le \Delta$, 
\[
\norm{A}^2 = \norm{AA^\top} \le 1 + \norm{AA^\top - \Id} \le 1 + \Delta \le 2\,.
\]
Hence we get the desired bound. It is clear that this derivation is also correct if we replace Frobenius norm by spectral norm.
\end{proof}

Let us show how to estimate $\Sigma' = \Cov_{y \sim \cN(0,\Sigma)}{\cF(y)}$ up to error $O\Paren{\frac{1}{\log d} + \sqrt{\e}}$ in relative spectral norm. 
 
Let us split the samples into $2$ equal parts, and run 
the covariance estimation algorithm in relative spectral norm from Theorem 1.2 from \cite{KS17} on the projection onto the sphere of the first part. 
We can do that, since $\cF(y)$ is $O(1)$-sub-Gaussian. Indeed, for all $u\in\R^d$,
\[
\Paren{\iprod{u, \cF(y)}^p}^{1/p}\le 1.1\Paren{\iprod{u, y}^p}^{1/p} \cdot C_G\cdot \sqrt{p}\Paren{\E\iprod{u,y}^2}^{1/2} \le  \frac{1.1}{0.9} \cdot C_G\cdot \sqrt{p}\Paren{\E\iprod{u,\cF(y)}^2}^{1/2}\,,
\]
where $C_G \le O(1)$ is the sub-Gaussian parameter for the Gaussian distribution.

With high probability we get an estimator $\hat{\Sigma}_1$ such that 
    \[
    \norm{(\Sigma')^{-1/2}\, \hat{\Sigma}_1\, (\Sigma')^{-1/2} - \Id} \le O\Paren{\sqrt{{\e}}}\,.
    \]
\lemmaref{lem:closeness} and our assumption on the effective rank imply that 
    \[
    \Norm{\Sigma^{-1/2}\,
    \Paren{\Cov_{y \sim \cN(0,\Sigma)}{\cF(y)} - \Sigma}
    \,\Sigma^{-1/2}}  
    \le  O\Paren{\frac{1}{\log(d)}} \,.
    \]
Hence, by \cref{prop:triangle-inequality}, 
$0.99 \cdot \Id \preceq \hat{\Sigma}_1^{-1}\Sigma \preceq 1.01\cdot \Id$. 
If we multiply the second part of the samples by $\hat{\Sigma}_1^{-1}$, we get an elliptical distribution with new scatter matrix $\tilde{\Sigma} = \rho \hat{\Sigma}_1^{-1} \Sigma$, where $\rho = d/\Tr(\hat{\Sigma}_1^{-1} \Sigma)$. $\tilde{\Sigma}$ has trace $d$ and its eigenvalues are between $0.9$ and $1.1$, and, in particular, its effective rank is $\Omega(d)$.

\section{Fourth Moment of the Spatial Sign}
In this section and further in the paper we use a slightly modified definition of the function $\cF$:
\begin{definition}\label{def:function}
    \[
    \cF(x) = 
    \begin{cases}
        \spsign(x) = {\sqrt{d}} \cdot x/{\norm{x}} &\text{if } d-\Delta \le \norm{x}^2 \le d+\Delta\,,\\
        \sqrt{d/(d-\Delta)}\cdot x &\text{if } \norm{x}^2 < d-\Delta\,,\\
        \sqrt{d/(d+\Delta)}\cdot x &\text{if } \norm{x}^2 > d+\Delta\,.
    \end{cases}
    \]    
for $\Delta = C\cdot \sqrt{d\log d}$, where $C$ is a large enough absolute constant.
\end{definition} 

If $\effrank(\Sigma)\ge \Omega(d)$, then with high probability $\cF(y_i)$ coincides with $\spsign(y_i)$ for all $n=\poly(d)$ samples $y_1,\ldots y_n\simiid\cN(0,\Sigma)$. Note that as long as $\effrank(\Sigma) \ge \Omega(d)$, the proof of \cref{lem:closeness} remains the same, so $\Cov \cF(y)$ is $O(1/d)$-close to $\Sigma$ in relative spectral norm.

\begin{lemma}\label{lem:tensor}
    Let $\Sigma\in \R^{d\times d}$ be a positive definite matrix such that $0.9\Id \preceq \Sigma \preceq 1.1\Id$, $\Tr(\Sigma) = d$.
    Let $g\sim \cN(0,\Id)$, $\Sigma'= \Cov(\cF(\Sigma^{1/2}g))$ and $\phi(g)= \Paren{\Sigma'}^{-1/2} \cF(\Sigma^{1/2}g)$.
    
    Let $T = \E\Paren{\phi(g)\transpose{\phi(g)} - \Id}^{\otimes 2}$ and $S = \E_{g\sim N(0,\Id)}\Paren{gg^\top / \norm{g}^2 - \Id}^{\otimes 2}$. Then for all $i_1 i_2 i_3 i_4, i, j \in [d]$, the components of $T$ in the eigenbasis of $\Sigma$ are as follows:
    \begin{enumerate}
        \item  $T_{i_1 i_2 i_3 i_4} = S_{i_1 i_2 i_3 i_4} = 0$ if some of the $i_m$ is not equal to the others.
        \item $T_{iiii} = 2 + O(1/d) = S_{iiii} + O(1/d)$.
        \item For $i\neq j$, $T_{ijij} = 1 + O(1/d) = S_{ijij} + O(1/d)$.
        \item For $i\neq j$, $T_{iijj} = S_{iijj} + O\Paren{{\Norm{\Sigma-\Id}}/{d}} + O(\log^2(d)/d^2)$.
    \end{enumerate}
\end{lemma}
\begin{proof}
 Without loss of generality we assume $\Normf{M}=1$. 
    Let us bound the entries of the tensor 
    $T := \E\Paren{\phi(g)\transpose{\phi(g)} - \Id}^{\otimes 2}$ in the eigenbasis of $\Sigma$:
    \[
    T_{i_1 i_2 i_3 i_4} = 
    \E
    \Paren{\sqrt{\tfrac{\lambda_{i_1}}{\lambda'_{i_1}}\cdot\tfrac{\lambda_{i_2}}{\lambda'_{i_2}}}\cdot g_{i_1}g_{i_2}/R(g)^2 - \Id_{i_1i_2}} 
    \Paren{\sqrt{\tfrac{\lambda_{i_3}}{\lambda'_{i_3}}\cdot\tfrac{\lambda_{i_4}}{\lambda'_{i_4}}}\cdot g_{i_3}g_{i_4}/R(g)^2 - \Id_{i_3i_4} }
     \,,
    \]
    where $R(g)^2 = {\tfrac{1}{d}\sum_{i=1}^d \lambda_i g_i^2 \cdot \ind{\sum_{i=1}^d \lambda_i g_i^2\in[d - \Delta,d + \Delta]} 
    + \tfrac{d - \Delta}{d} \ind{\sum_{i=1}^d \lambda_i g_i^2 < d-\Delta} 
    + \tfrac{d+\Delta}{d} \ind{\sum_{i=1}^d \lambda_i g_i^2 > d+\Delta}
    }$, where $\Delta= O\Paren{\sqrt{d\log d}}$.
    
    First, note that in the case $\lambda_1 = \ldots = \lambda_d = 1$, the components $T_{i_1 i_2 i_3 i_4}$ and $S_{i_1 i_2 i_3 i_4}$ are expectations of random variables that coincide with probability at least $1/\poly(d)$ and have variance bounded by $O(1)$. Hence the difference between them is at most $O(d^{-10})$.
    
    Consider the entries $T_{i_1 i_2 i_3 i_4}$ such that some $i_m$ is not equal to any of the others. 
    In this case, since $\sign(g_{i_m})$ is independent of $\abs{g_{i_m}}$ and of all other $g_j$, $T_{i_1 i_2 i_3 i_4} = 0$.
    
    Consider the terms of the form $T_{iijj}$ (including the case $i=j$). 
    \[
    T_{iijj} = \frac{\lambda_i\lambda_j}{\lambda'_i\lambda'_j}\cdot \E\frac{g_i^2g_j^2}{R(g)^4} - 1 = \Paren{1 + O(1/d)}\cdot \E\frac{g_i^2g_j^2}{R(g)^4} - 1 \,.
    \]
    As in the proof of \lemmaref{lem:closeness}, let $q = 1-R(g)^2$. Since $\abs{2q - q^2} \le 1/2$,
    \[
    \E\frac{g_i^2g_j^2}{R(g)^4} = \E\frac{g_i^2g_j^2}{1-(2q - q^2)} 
    = \E g_i^2g_j^2  + \sum_{k=1}^\infty \E g_i^2g_j^2 (2q - q^2)^k\,.
    \]
    The first term is $1$ if $i\neq j$ and $3$ if $i=j$. Let us bound the other terms.
    By the same argument as in \lemmaref{lem:closeness}, 
    $\E g_i^2g_j^2 q = \frac{2\lambda_j + 2\lambda_i}{d} + O(d^{-10})$.
     If we write $\lambda_k = 1 + \delta_k$, this expression differs from the same expression for $\lambda_k = 1$ by $O(\delta_i / d + \delta_j/d)$.
    Similarly,
\[
    \E \Brac{g_i^2g_j^2 q^2} 
    = \E g_i^2g_j^2\Paren{\tfrac{1}{d}\sum_{m=1}^d\lambda_m (1-g_m^2)}^2+ O(d^{-10})\,.
    \]
    Since $g_i$ are independent and $\E(1-g_m^2)=1$,
    \[
    \E \E g_i^2g_j^2  \Paren{\tfrac{1}{d}\sum_{m=1}^d\lambda_m (1-g_m^2)}^2
    = \tfrac{1}{d^2}\sum_{m=1}^d \lambda_m^2 \E g_i^2g_j^2 (1-g_m^2)^2 + \tfrac{2}{d^2}\lambda_i\lambda_j \E g_i^2g_j^2 (1-g_i^2)(1-g_j^2) \,.
    \]

    If we write $\lambda_k = 1 + \delta_k$ for $\abs{\delta_k}\le 0.1$, this expression differs from the same expression for $\lambda_k = 1$ by at most $O(\max_{k\in[d]} \delta_k / d)$.
     
    Finally, consider
\[
    \E \Brac{g_i^2g_j^2 q^3} 
    = \E g_i^2g_j^2\Paren{\tfrac{1}{d}\sum_{m=1}^d\lambda_m (1-g_m^2)}^3 + O(d^{-10})\,.
    \]
    Note that,
    \begin{align*}
    \E g_i^2g_j^2\Paren{\tfrac{1}{d}\sum_{m=1}^d\lambda_m (1-g_m^2)}^3
    = &\tfrac{1}{d^3} \sum_{m=1}^d \lambda_m^3 \E g_i^2g_j^2 (1-g_m^2)^3 + \tfrac{1}{d^3} \sum_{m=1}^d \lambda_i\lambda_m^2\E g_i^2g_j^2(1-g_i^2)(1-g_m^2)^2 
    \\&+ \tfrac{1}{d^3} \sum_{m=1}^d \lambda_j\lambda_m^2\E g_i^2g_j^2(1-g_j^2)(1-g_m^2)^2
    + O(1/d^3) \,.
    \end{align*}
    Since each term in each sum is bounded by $O(1)$, $\abs{\E \brac{g_i^2g_j^2 q^3} }\le O(1/d^2)$.
    Since $q^4 \le O(\log^2(d)/d^2)$, the sum of the terms of the series $\sum_{k=1}^\infty \E g_i^2g_j^2 (2q - q^2)^k$ that are not linear in $q$, $q^2$, or $q^3$, is bounded by $O(\log^2(d)/d^2)$. Hence we get the desired expressions for $T_{iijj}$. Note that $T_{ijij}$ for $i\neq j$ is equal to $1+T_{iijj}$, so $T_{ijij} = 1 + O(1/d)$.

\end{proof}

\section{Generalized Stability}

\begin{lemma}\label{lem:stability}
    Let $m,\e,\delta, r, \cV, \cP, \mu, Q$ be as in \cref{def:stability}, and suppose that $\e < \delta < 0.1$. Let $S$ be a $\Paren{2\e,\delta,r,\cV,\cP}$-stable multiset of size $n\in \N$ with respect to $\mu$ and $Q$. Let $T = \Set{x_1,\ldots,x_n}$ be an $\e$-corruption of $S$. 

    Let $\cW_\e \subset \R^{n}$ be the set of vectors whose entries satisfy $0 \le w_i\le \frac{1}{n}$, $\sum_{i=1}^n w_i \ge 1 -\e$. For $w\in \cW_\e$ denote $\mu_w = \frac{1}{\sum_{i=1}^n w_i} \sum_{i=1}^{n} w_i x_i$ and $\Sigma_w = \frac{1}{\sum_{i=1}^n w_i}\sum_{i=1}^n w_i\paren{x_i-\mu_w}\paren{x_i-\mu_w}^\top$. 

    If for some $w\in \cW_\e$ and $\lambda > 0$, for all $P\in \cP$, $\Abs{\iprod{P, Q-\Sigma_w}} \le \lambda$, then 
    \[
    \sup_{v\in \cV}\; \Abs{\iprod{v, \mu_w-\mu}} \le O\Paren{\delta + \sqrt{\e\lambda} + \e r}\,.
    \]
\end{lemma}

\begin{proof}
Let $G$ be the set of indices of uncorrupted points in $T$, and let $B = [n]\setminus G$.
Note that
\[
\sum_{i=1}^{n} w_i\iprod{v, x_i - \mu} = \Paren{\tfrac{1}{\card{G}}\sum_{i\in G} \iprod{v,x_i - \mu}}
+ \sum_{i\in G} \Paren{w_i - \frac{1}{\card{G}}} \iprod{v,x_i - \mu} 
+ \sum_{i\in B} w_i \iprod{v,x_i - \mu}. 
\]
The first term is bounded by $\delta$. Let us bound the second term.
By \CS inequality, the second term can be bounded as follows:
\begin{align*}
\Paren{\sum_{i\in G} \Paren{w_i - \frac{1}{\card{G}}} \iprod{v,x_i - \mu}}^2
&\le
\Paren{\sum_{i\in G} \Paren{\frac{1}{\card{G}} - w_i}}
\cdot \sum_{i\in G}  \Paren{\frac{1}{\card{G}} - w_i} \iprod{v,x_i - \mu}^2 
\\&\le \e \cdot \sum_{i\in G}{\frac{1}{\card{G}}} \iprod{v,x_i - \mu}^2 
\\&\le \e \cdot \Paren{vQv^\top + \delta^2/\e}
\\&\le \e r^2 + \delta^2\,.
\end{align*}

It remains to bound the third term. By \CS inequality,
\[
\Paren{\sum_{i\in B} w_i \iprod{v,x_i - \mu}}^2\le
\Paren{\sum_{i\in B} w_i}
\cdot \sum_{i\in B}  w_i \iprod{v,x_i - \mu}^2 
\le
\e \cdot \sum_{i\in B}  w_i \iprod{v,x_i - \mu}^2 \,.
\]
Let $w'_i = w_i$ for all $i\in G$ and $w_i'=0$ otherwise. It follows that
\[
\Sigma_{w'} = \sum_{i\in G} w_i'(x_i-\mu)(x_i-\mu)^\top - \Paren{\sum_{i=1}^n w_i'} (\mu-\mu_{w'})(\mu-\mu_{w'})^\top\,.
\]
Note that
\[
\abs{\iprod{v,\mu-\mu_{w'}}} \le \abs{\iprod{v,\mu-\tfrac{1}{\card{G}}\sum_{i\in G} x_i}} + \abs{\iprod{v,\sum_{i\in G} \Paren{\tfrac{1}{\card{G}}-w_i} \paren{x_i-\mu}}} 
\le \delta + \sqrt{\e r^2 + \delta^2}\,.
\]
In addition, note that
\[
\sum_{i\in G} w_i'\iprod{v, x_i-\mu}^2 = \tfrac{1}{\card{G}}\sum_{i\in G} \iprod{v, x_i-\mu}^2 - 
\sum_{i\in G} \Paren{\tfrac{1}{\card{G}}-w_i'}\iprod{v, x_i-\mu}^2 = vQv^\top - \delta^2 - \e r^2\,.
\]
Hence
\[
v\Sigma_{w'}v^\top = vQv^\top - O\Paren{\delta^2 + \e r^2}\,.
\]

Since
\[
\sum_{i\in G} w_i\paren{x_i-\mu}\paren{x_i-\mu}^\top \succeq \Sigma_{w'}\,,
\]
\[
\sum_{i\in G} w_i \iprod{v, x_i-\mu}^2 \ge vQv^\top - O(\delta^2 + \e r^2).
\]
Therefore,
\begin{align*}
\sum_{i\in B} w_i \iprod{v, x_i-\mu}^2 
&\le
\sum_{i=1}^n w_i \iprod{v, x_i-\mu}^2 - \sum_{i\in G} w_i \iprod{v, x_i-\mu}^2
\\&\le \lambda + vQv^\top - vQv^\top + O(\delta^2 + \e r^2)
\\&\le \lambda +  O(\delta^2 + \e r^2)\,.
\end{align*}

By \CS inequality,
\[
\abs{\sum_{i\in G} w_i \iprod{v,x_i - \mu}} \le \sqrt{\e} \cdot \sqrt{\lambda +  O(\delta^2 + \e r^2)}\,.
\]
Since $\sum_{i=1}^n w_i \ge 1/2$, we get the desired bound.

\end{proof}

To analyze the (generalized) filtering algorithm, we need the following notion.

\begin{definition}[Adequacy of $\cP$]\label{def:adequate}
    Let $m\in \N$, $\cV \subset \R^m$, $\cP \subset \R^{m\times m}$ such that $\cV\otimes \cV \subset \cP$. We say that $\cP$ is \emph{adequate with resect to $\cV$}, if
    \begin{enumerate}
        \item For all $a \in \R^m$ and $P \in \cP$, $\Iprod{a^{\otimes 2}, P} \ge 0$
        \item For all $a,b\in \R^m$ and $P\in \cP$, 
        \[
        \Iprod{a\otimes b, P}^2 \le \Paren{\sup_{v\in \cV}\Iprod{a, v}^2}\cdot\Paren{\sup_{v\in \cV}\Iprod{b, v}^2}\,.
        \]
    \end{enumerate}
\end{definition}

When $\cP = \cV\otimes \cV$, it is clearly adequate with respect to $\cV$. Further we show (\cref{lem:adequate}) that the \emph{sum-of-squares relaxation} of the set of tensors of the form $v^{\otimes4}$ is adequate with respect to the set of matrices of the form $v^{\otimes 2}$. 

\subsection{Filtering Algorithm}
The following algorithm is a generalized version of filtering. Note that it updates the samples via some (known) transformation $F$. When we use if for estimation in relative spectral norm, $F$ is a linear transformation that maps the points into the isotropic position so that $\mu^{(t)}$ is the identity matrix. When we use it for estimation in Frobenius norm, $F$ is simply the identity transformation.
 
\begin{mdframed}[linecolor=black, linewidth=0.8pt, roundcorner=4pt,
  innertopmargin=0.8em, innerbottommargin=0.8em, innerleftmargin=1em, innerrightmargin=1em]
\begin{algorithm}[Filtering]\label{alg:filtering}
\begin{algorithmic}
\State
\State \textbf{Input:} $x\in \R^{n\times m}, Q\in \R^{m\times m}, \cP\subset \R^{m\times m}, F:\R^n\times \R^{n\times m} \to \R^{n\times m}, \varepsilon, \delta\in (0,1), r,C>0$.
\State For all $i\in [n]$, let $w^{(0)}_i \gets 1/n$
\State Let $x^{(0)} \gets x$, $\mu^{(0)} \gets \sum_{i=1}^n \tfrac{1}{n} x_i$, $\Sigma^{(0)} = \sum_{i=1}^n \tfrac{1}{n} \paren{x_i - \mu^{(0)}}\paren{x_i - \mu^{(0)}}^\top$.
\State Let $t \gets 0$.
\While {there exists $P^*\in \cP$ such that $\Abs{\iprod{P^*, Q-\Sigma^{(t)}  }} > C \cdot \Paren{\delta^2/\e + \e r^2}$}
    \State For all $i\in [n]$, let 
    $\tau_i^{(t)} = {\Iprod{P^*, \paren{x_i^{(t)}-\mu^{(t)}}\paren{x_i^{(t)}-\mu^{(t)}}^\top } }$
    \State Sort the $\tau^{(t)}_i$ in decreasing order.
    \State Let $N$ be the first index such that $\sum_{i=1}^{N} w^{(t)}_i > 2\varepsilon$.
    \State Let $w^{(t+1)}_i = \Paren{1-\tau_i/\tau_{\max}}w_i^{(t)}$ for $i \le N$, and $w^{(t+1)}_i = w^{(t)}_i$ otherwise.
    \State Let $x^{(t+1)} \gets F(w^{(t+1)}, x^{(0)})$.
    \State Let $\mu^{(t + 1)} \gets \frac{1}{\sum_{i=1}^n w_i^{(t+1)}}\sum_{i=1}^n w_i^{(t+1)} x_i^{(t+1)}$.
    \State Let $\Sigma^{(t+1)} \gets \frac{1}{\sum_{i=1}^n w_i^{(t+1)}}\sum_{i=1}^n w_i^{(t+1)}\paren{x_i^{(t+1)}-\mu^{(t+1)}}\paren{x_i^{(t+1)}-\mu^{(t+1)}}^\top$.
    \State Update $t \gets t + 1$.
\EndWhile
\State \Return $\mu^{(t)}$
\end{algorithmic}
\end{algorithm}
\end{mdframed}

\begin{theorem}\label{thm:filtering}
    Let $C > 0$ be a sufficiently large universal constant.
    Let $m,\e,{\delta}, r, \cV, \cP, \mu, Q$ be as in \cref{def:stability}, where $\e < {\delta} < 0.01$ and $\cP$ is adequate with respect to $\cV$.
    Suppose that at each step of \cref{alg:filtering} there exists 
    \[
    \delta^{(t)} \le O\Paren{\delta + \min\Set{\sqrt{\e}, \e^{3/4}\Abs{\iprod{P^*, Q-\Sigma^{(t)}  }} ^{1/4}} + \e r }
    \]
    such that
    $\set{x_1^{(t)},\ldots x_n^{(t)}}$ is an $\e$-corruption of a $\Paren{10\e,\delta^{(t)},r,\cV,\cP}$-stable multiset of size $n$ with respect to $\mu^{(t)}$ and $Q$.
    Then the while-statement of \cref{alg:filtering} is checked at most $O(n)$ times, and the algorithm outputs $\hat{\mu}$ such that 
    \[
    \sup_{v\in \cV}\; \Abs{\iprod{v, \hat{\mu}-\mu}} \le O\Paren{\delta + \e r}\,.
    \]
\end{theorem}

The following lemma easily implies \cref{thm:filtering}:

\begin{lemma}[Invariant of the Filtering Algorithm]\label{lem:filtering-invariant} 
Let $G\subseteq [n]$ be the set of indices of uncurrupted samples, and $B = [n]\setminus G$.
Suppose that in iteration $t$, \begin{equation}\label{eq:filtering-invariant}
\sum_{i \in G} \tfrac{1}{n} - w_i^{(t)} \quad \text{<} \quad \sum_{i \in B} \tfrac{1}{n} - w_i^{(t)}\,,
\end{equation}
and furthermore there exists $P^*\in \cP$ such that
\[
\Abs{\iprod{P^*, Q-\Sigma_{w^{(t)}}}} > C\cdot \Paren{\delta^2/\e + \e r^2}.
\]
Then
\[
\sum_{i \in G} \left( \tfrac{1}{n} - w_i^{(t+1)} \right) <
\sum_{i \in B} \left( \tfrac{1}{n} - w_i^{(t+1)} \right).
\]
\end{lemma}

\begin{proof}[Proof of \cref{thm:filtering}]
Let us show that the algorithm terminates after at most $t' = \lceil 2\e n\rceil $ iterations. Assume towards a contradiction that it does not terminate after $t'$ iterations. The number of nonzero entries of $w^{(t)}$ decreases by at least $1$ at each iteration. Hence we have at least $\e n$ zero entries of $w^{(t')}$ that are in $G$. By \cref{lem:filtering-invariant},
\[
\e \le \sum_{i\in G} \Paren{\tfrac{1}{n} - w^{(t')}_i} <  \sum_{i\in G} \Paren{\tfrac{1}{n} - w^{(t')}_i} \le \frac{\card{B}}{n} \le \e\,,
\]
which is a contradiction.

By \cref{lem:stability}, we get the desired bound. 
\end{proof}

Now let us prove \cref{lem:filtering-invariant}.

\begin{proof}
For simplicity let us denote $w = w^{(t)}$ and $w' = w^{(t+1)}$. Note that it is enough to show that
\[
\sum_{i\in G} \Paren{w_i - w'_i} < \sum_{i\in B} \Paren{w_i - w'_i}\,.
\]

Since for $i\le N$, $w_i - w'_i = \tfrac{1}{\tau_{\max}}\tau_i w_i$ and for $i > N$, $w_i - w'_i = 0$, it is equivalent to
\[
\sum_{i\in G\cap [N]} w_i \tau_i < \sum_{i\in B \cap [N]} w_i\tau_i\,.
\]

We will show it by proving the following two lemmas:

\begin{lemma}[Upper bound for the good samples]\label{lem:filtering-upper-bound}
\[
\sum_{i\in G\cap [N]} w_i \tau_i \le O\Paren{\paren{\delta^{(t)}}^2/\e + \e r^2  + \e^2 \Abs{\Iprod{P^*, Q-\Sigma_{w}}} }\,.
\]
\end{lemma}

\begin{lemma}[Lower bound for the bad samples]\label{lem:filtering-lower-bound}
\[
\sum_{i\in B\cap [N]} w_i \tau_i \ge \Abs{\Iprod{P^*, Q-\Sigma_{w}}} / 2\,.
\]
\end{lemma}

\begin{proof}[Proof of \cref{lem:filtering-upper-bound}]
\begin{align*}
\sum_{i\in G\cap [N]} w_i \tau_i 
&= \sum_{i\in G\cap [N]} w_i \Iprod{P^*, \Paren{x_i-\mu_{w^{}}}\Paren{x_i-\mu_{w^{}}}^\top }
\\&\le O\Paren{\sum_{i\in G\cap [N]} w_i {\Iprod{P^*, \Paren{x_i-\mu}\Paren{x_i-\mu}^\top }}
+ \sum_{i\in G\cap [N]} w_i {\Iprod{P^*, \Paren{\mu-\mu_{w^{}}}\Paren{\mu-\mu_{w^{}}}^\top }} }
\\&\le O\Paren{\sum_{i\in G\cap [N]} \tfrac{1}{n} {\Iprod{P^*, \Paren{x_i-\mu}\Paren{x_i-\mu}^\top }}
+ \e {\Iprod{P^*, \Paren{\mu-\mu_{w^{}}}\Paren{\mu-\mu_{w^{}}}^\top }} }\,.
\end{align*}
The first term can be bounded using stability:
\begin{align*}
\sum_{i\in G\cap [N]} \tfrac{1}{n} {\Iprod{P^*, \Paren{x_i-\mu}\Paren{x_i-\mu}^\top }}
&= \sum_{i\in G}  \tfrac{1}{n}\Iprod{P^*, \Paren{x_i-\mu}\Paren{x_i-\mu}^\top } - \sum_{i\in G\setminus [N]}  \tfrac{1}{n}\Iprod{P^*, \Paren{x_i-\mu}\Paren{x_i-\mu}^\top}
\\&\le \tfrac{\card{G}}{n}\Iprod{P,Q} - \tfrac{\card{G}\setminus[N]}{n}\Iprod{P,Q} + O\Paren{\paren{\delta^{(t)}}^2/\e}
\\&\le O\Paren{\e r^2 + \paren{\delta^{(t)}}^2/\e}\,.
\end{align*}
Let us bound the second term. Since $\cP$ is adequate with respect to $\cV$,
\begin{align*}   
\Iprod{P^*, \Paren{\mu-\mu_{w^{}}}\Paren{\mu-\mu_{w^{}}}^\top }
&\le O\Paren{\max_{v\in\cV}\Iprod{v, \mu-\mu_{w^{}}}^2}
\\&\le O\Paren{\paren{\delta^{(t)}}^2 + \e \Abs{\Iprod{P^*, Q-\Sigma_{w}}} + \e^2 r^2}\,,
\end{align*}
where the second inequality follows from \cref{lem:stability}.
\end{proof}

\begin{proof}[Proof of \cref{lem:filtering-lower-bound}]
    
\end{proof}
First, we show that 
\[
\sum_{i\in B} w_i \tau_i \ge 0.9\cdot {\Abs{\Iprod{P^*, Q-\Sigma_{w}}}}\,.
\]
Note that
\[
\sum_{i=1}^n w_i\tau_i = \sum_{i=1}^n w_i \Iprod{P^*, \Paren{x_i-\mu_{w^{}}}\Paren{x_i-\mu_{w^{}}}^\top } = \Iprod{P^*, \Sigma_w - Q} + \Iprod{P^*, Q}\,.
\]
Consider
\begin{align*}
\Abs{\sum_{i\in G} w_i\tau_i - \Iprod{P^*, Q}} 
&=  \Abs{\sum_{i\in G} w_i \Iprod{P^*, \Paren{x_i-\mu_{w^{}}}\Paren{x_i-\mu_{w^{}}}^\top } -  \Iprod{P^*, Q}}
\\&\le \Abs{\sum_{i\in G} w_i \Iprod{P^*, \Paren{x_i-\mu}\Paren{x_i-\mu}^\top } -  \Iprod{P^*, Q}} + 
\Abs{\sum_{i\in G} w_i {\Iprod{P^*, \Paren{\mu-\mu_{w^{}}}\Paren{\mu-\mu_{w^{}}}^\top }}}
\\& \qquad + \Abs{\sum_{i\in G} w_i {\Iprod{P^*, \Paren{x_i-\mu}\Paren{\mu-\mu_{w^{}}}^\top }}}
+\Abs{\sum_{i\in G} w_i {\Iprod{P^*, \Paren{\mu-\mu_{w^{}}}\Paren{x_i-\mu}^\top }}}\,.
\end{align*}

The first term is bounded by $O(\paren{\delta^{(t)}}^2/\e + \e r^2)$ (by stability). The second term is bounded by $O\Paren{\paren{\delta^{(t)}}^2 + \e \Abs{\Iprod{P^*, Q-\Sigma_{w}}} + \e^2 r^2}$, as in the proof of \cref{lem:filtering-upper-bound}. The other terms are also bounded by $O\Paren{\paren{\delta^{(t)}}^2 + \e \Abs{\Iprod{P^*, Q-\Sigma_{w}}} + \e^2 r^2}$ (by adequacy of $\cP$, \cref{lem:stability} for the bound on $\max_{v\in \cV}\iprod{v, \mu - \mu_w}$ and stability for the bound on $\max_{v\in \cV}\iprod{v, \mu - \sum_{i\in G} w_ix_i}^2$).

Hence
\[
\sum_{i\in B} w_i \tau_i \ge \Iprod{P^*, \Sigma_w - Q} - \Abs{\sum_{i\in G} w_i\tau_i - \Iprod{P^*, Q}} \ge \Paren{1-\e}\Iprod{P^*, \Sigma_w - Q} - O\Paren{\paren{\delta^{(t)}}^2/\e + \e r^2}\,.
\]

By the bound on $\delta^{(t)}$ and since $\Abs{\iprod{P^*, Q-\Sigma_{w^{(t)}}}} > C\cdot \Paren{\delta^2/\e + \e r^2}$, we get that $\sum_{i\in B} w_i \tau_i \ge 0.99\cdot {\Abs{\Iprod{P^*, Q-\Sigma_{w}}}}$. Let us bound $\sum_{i\in B\setminus[N]} w_i \tau_i$. By \cref{lem:filtering-upper-bound}, for each $i > N$,
\[
\tau_i \le \frac{1}{\sum_{i\in G\cap [N]}w_i} \sum_{i\in G\cap N} w_i \tau_i \le O\Paren{\paren{\delta^{(t)}}^2/\e^2 + r^2  + \e \Abs{\Iprod{P^*, Q-\Sigma_{w}}} }\,,
\]
where we used the fact that $\tau_i \le \tau_j$ for all $j\in G\cap[N]$. Hence
\[
\sum_{i\in B\setminus[N]} w_i \tau_i \le O\Paren{\paren{\delta^{(t)}}^2/\e + \e r^2  + \e^2 \Abs{\Iprod{P^*, Q-\Sigma_{w}}} }\le  0.1 \Abs{\Iprod{P^*, Q-\Sigma_{w}}}\,.
\]
Therefore,
\[
\sum_{i\in B\cap [N]} w_i \tau_i \ge \Abs{\Iprod{P^*, Q-\Sigma_{w}}} / 2\,.
\]
\end{proof}
\cref{lem:filtering-invariant} follows from \cref{lem:filtering-upper-bound} and \cref{lem:filtering-lower-bound} due to the bound on $\delta^{(t)}$ and since $\Abs{\iprod{P^*, Q-\Sigma_{w^{(t)}}}} > C\cdot \Paren{\delta^2/\e + \e r^2}$.

\section{Spectral Covariance Stability}

\begin{lemma}[From Euclidean to Spectral Stability]\label{lem:stability-frob-to-spectral}
    Let $0 < \e < \delta < 0.1$, $r, R > 0$, 
    $\cB_R = \set{V\in \R^{d\times d} : \normf{V} \le R}$. Suppose that $\set{X_1,\ldots,X_n} \subset \R^{d\times d}$ is an $\Paren{\e,\delta,r,\cB_R,\cB_R\otimes \cB_R}$-stable set with respect to some $M\in \R^{d\times d}$ and $Q\in \R^{d^2\times d^2}$.
    Let $\cS_R = \set{uu^\top \in \R^{d\times d} : \norm{u}\le \sqrt{R}}$ and let 
    \[
    \cP_R = \Set{\pE v^{\otimes 4} : v\in \R^d, \pE \text{ is a degree-$4$ pseudo-expectation that satisfies the constraint } \norm{v}^2 \le R}\,.
    \]
    
    Then for each $Q'\in \R^{d^2\times d^2}$ such that 
    \[
    \Abs{\Iprod{P, Q-Q'}}\le O(\delta^2/\e)\,,
    \]
    $\set{X_1,\ldots,X_n}$ is an $\Paren{\e,\delta,r,\cS_R,\cP_R}$-stable set with respect to $M$ and $Q'$. 
\end{lemma}
\begin{proof}
    Let $\cC_R = \conv\Paren{\cB_R\otimes \cB_R}$. Note that it is the set of $d^2\times d^2$ matrices with trace at most $R^2$. Note that $\cS_R \subset \cB_R$, and $\cP_R \subset \cC_R$, since $\sum_{i=1}^d\sum_{j=1}^d \pE u_iu_j \cdot u_iu_j = \pE \sum_{i=1}^d u_i^2 \cdot \sum_{j=1}^d u_j^2 \le R^2$.
    
    Since maximum and minimum of a linear function over a convex set is achieved in its extreme poitns,  $\set{X_1,\ldots,X_n}$ is an $\Paren{\e,\delta,r,\cB_R,\cC_R}$-stable set with respect to $M$ and $Q$, and hence of an $\Paren{\e,\delta,r,\cS_R,\cP_R}$-stable with respect to $M$ and $Q$. 
    
    Finally, it is clear that $Q$ can be replaced by any $Q'$ such that $\Abs{\Iprod{P, Q-Q'}}\le O(\delta^2/\e)$.
\end{proof}

\begin{corollary}\label{cor:spectral-stability}
Let $\Sigma \in \R^{d\times d}$ be a positive definite matrix such that $0.9 \cdot \Id_d \preceq \Sigma \preceq 1.1\cdot \Id_d$. Let $y_1,\ldots,y_n\simiid N(0,\Sigma)$, $\Sigma' = \Cov_{y\sim N(0,\Sigma)} \cF(y)$, and let $\zeta_i = \Paren{\Sigma'}^{-1/2}\cF(y_i)$. Let $R \le O(1)$.

If $\e\log(1/\e) \gtrsim 1/d$ and $n\gtrsim d^2\log^5(d)/\e^2$, then $\zeta_1\zeta_1^\top,\ldots,\zeta_n\zeta_n^\top$ is an $\Paren{\e,\delta,r,\cS_R, \cP_R}$-stable set with respect to $M=\Id_d$ and $Q' = 2\Id_{d^2}$, for some $\e\le \delta \le O(\e{\log(1/\e)})$ and $r \le O(1)$.
\end{corollary}
\begin{proof}
By \ref{lem:frobenius-stability-isotropic}, with high probability, $\zeta_1\zeta_1^\top,\ldots,\zeta_n\zeta_n^\top$ is an $\Paren{\e,\delta,r,\cB_R, \cB_R\otimes \cB_R}$-stable set with respect to $M=\Id_d$ and $Q = \E_{y\sim N(0,\Sigma)} \Paren{\Paren{\Sigma'}^{-1/2}\cF(y)\cF(y)^\top\Paren{\Sigma'}^{-1/2} - \Id}^{\otimes 2}$. By \cref{lem:stability-frob-to-spectral}, it is enough to show that 
\[
\Abs{\Iprod{\pE u^{\otimes 4}, Q} - 2} \le O(\delta^2/\e)\,.
\]

By \cref{lem:tensor}, $Q_{iiii} = 2 \pm O(1/d)$, and for $i\neq j$, $Q_{ijij} = 1 \pm O(1/d)$, $Q_{iijj}= \pm O(1/d)$, and other entries are $0$. Hence
\begin{align*}
\Iprod{\pE u^{\otimes 4}, Q} 
&= \sum_{1\le i,j \le d} \pE u_i^2u_j^2 Q_{ijij} + \sum_{1\le i\neq j \le d} \pE u_i^2u_j^2 Q_{iijj} = 2 \pm O(1/d) = 2 \pm O(\e\log(1/\e))\,.
\end{align*}
\end{proof}

\begin{lemma}\label{lem:stability-linear-transformation}
Let $R \le O(1)$. Let $\zeta_1\zeta_1^\top,\ldots,\zeta_n\zeta_n^\top$ be an $\Paren{\e,\delta,r,\cS_R, \cP_R}$-stable set with respect to $\Id_d$ and $2\Id_{d^2}$, for some $\e\le \delta$ and $r\le O(1)$.

Then for each matrix $A\in \R^{d\times d}$ such that $\norm{A}^2\le R$, $A\zeta_1\Paren{A\zeta_1}^\top,\ldots,A\zeta_n\Paren{A\zeta_n}^\top$ is an $\Paren{\e,\delta',r,\cS_1, \cP_1}$-stable set with respect to $M = AA^\top$ and $Q = 2\Id_{d^2}$,
where $r\le O(1)$ and
\[
\delta' = O\Paren{\delta + \sqrt{\e\norm{AA^\top - \Id_d}}}\,.
\]
\end{lemma}
\begin{proof}
Let us check the first property of stability. 
For each unit $u\in \R^{d}$,
\[
\Abs{\tfrac{1}{\card{S'}} \sum_{x\in S'} \iprod{uu^\top, A\zeta_i\zeta_i^\top A^\top-AA^\top}}
= \Abs{\tfrac{1}{\card{S'}} \sum_{x\in S'} \iprod{A^\top uu^\top A, \zeta_i\zeta_i^\top-\Id_d}}
\le O(\delta)\,,
\]
where the inequality follows from the stability of $\zeta_1\zeta_1^\top, \ldots, \zeta_n\zeta_n^\top$.
Let us check the second property. We need to show that
\[
\Abs{\tfrac{1}{\card{S'}} \sum_{x\in S'} \Iprod{\pE v^{\otimes 4}, \paren{A\zeta_i\zeta_i^\top A^\top-AA^\top}^{\otimes 2}} - \Iprod{\pE v^{\otimes 4}, 2\Id_{d^2}}}\le \paren{\delta'}^2/\e\,.
\]
Note that
\begin{align*}
\Iprod {\pE v^{\otimes 4}, \paren{A\zeta_i\zeta_i^\top A^\top-AA^\top}^{\otimes 2}}
&= \pE \Iprod{vv^\top, A\zeta_i\zeta_i^\top A^\top-AA^\top}^2
\\&= \pE \Iprod{\Paren{A^\top v}\Paren{A^\top v}, \zeta_i\zeta_i^\top -\Id_d}^2
\\&= \Iprod{\pE \Paren{A^\top v}^{\otimes 4}, \zeta_i\zeta_i^\top -\Id_d}\,.
\end{align*}
By the stability of $\zeta_1\zeta_1^\top, \ldots, \zeta_n\zeta_n^\top$, 
\[
\Abs{\tfrac{1}{\card{S'}} \sum_{x\in S'} 
\iprod{\pE \Paren{A^\top v}^{\otimes 4}, \zeta_i\zeta_i^\top -\Id_d}
- \Iprod{\pE \Paren{A^\top v}^{\otimes 4}, 2\Id_{d^2}}}\le O\Paren{\delta^2/\e}\,.
\]
Note that
\begin{align*}
\Abs{\Iprod{\pE \Paren{A^\top v}^{\otimes 4} - \pE v^{\otimes 4}, 2\Id_{d^2}}}
&= 2\Abs{\pE \norm{A^\top v}^4 - \pE \norm{v}^4}
\\&= 2\Abs{\pE \Paren{v^\top A A^\top v}^2 - \pE \Paren{v^\top v}^2}
\\&= 
2\Abs{\pE\Paren{{v^\top \Paren{A A^\top - \Id_d} v}}\cdot \Paren{{v^\top \Paren{A A^\top + \Id_d} v}}}
\\&\le \sqrt{\pE\Paren{{v^\top \Paren{A A^\top - \Id_d} v}}^2 \cdot \pE\Paren{{v^\top \Paren{A A^\top + \Id_d} v}}^2}
\\&\le O\Paren{\norm{A A^\top - \Id_d}}\,,
\end{align*}
where we used the \CS inequality for pseudo-distributions, \cref{fact:spectral-certificates} and $\norm{A}\le O(1)$ . Hence the second property is satisfied. The third property is also obviously satisfied.
\end{proof}

\begin{lemma}\label{lem:non-decreasing-covariance}
Let $\zeta_1\zeta_1^\top,\ldots,\zeta_n\zeta_n^\top$ be an $\Paren{10\e,\delta,r,\cS_1, \cP_1}$-stable set with respect to $\Id_d$ and $Q\in \R^{d^2\times d^2}$, where $\e \le \delta \lesssim 1$. Let $\set{z_1,\ldots,z_n}$ be an $\e$-corruption of $\set{\Paren{\paren{\Sigma'}^{1/2}\zeta_1}\Paren{\paren{\Sigma'}^{1/2}\zeta_1}^\top,\ldots, \Paren{\paren{\Sigma'}^{1/2}\zeta_n}\Paren{\paren{\Sigma'}^{1/2}\zeta_n}^\top}$. Let $w\in \cW_\e$, and let 
$\Sigma'' = \frac{1}{\sum_{i=1}^n w_i}\sum_{i=1}^n w_i z_i z_i^\top$. 
Then 
\[
\norm{\Paren{\Sigma''}^{-1/2}\Paren{\Sigma'}^{1/2}}\le O(1)\,.
\]
\end{lemma}
\begin{proof}
Note that it is enough to show that $\Sigma'' \succeq \Paren{1-O(\delta)} \Sigma'$.
Let $G$ be the set of indices where $z$ coincides with $\paren{\Sigma'}^{1/2}\zeta$
Note that
\[
\Sigma'' \succeq \frac{1}{\sum_{i=1}^nw_i}\sum_{i\in G} w_i \paren{\Sigma'}^{1/2}\zeta_i \zeta_i^\top \paren{\Sigma'}^{1/2} \succeq \Paren{1-O(\e)}\sum_{i\in G} w_i \paren{\Sigma'}^{1/2}\zeta_i \zeta_i^\top \paren{\Sigma'}^{1/2}\,.
\]
Let $u\in \R^d$ be an arbitrary unit vector. Since the function 
\[
u^\top\Paren{\Sigma'' - \Paren{1-O(\e)}\paren{\Sigma'}^{1/2}\Paren{\sum_{i\in G} w_i \zeta_i \zeta_i^\top} \paren{\Sigma'}^{1/2}}u
\]
is linear in $w$, its maximum is achieved in one of the extreme points of $\cW_\e$, i.e. in some set $S_u$ of size $\card{S_u}\ge \Paren{1-\e} n$. Denote $G_{u} = G\cap S_u$. By stability,
\[
\Norm{\frac{1}{\card{G_u}}\sum_{i\in G_u} \zeta_i \zeta_i^\top - \Id_d} \le \delta\,.
\]
It follows that
\begin{align*}
0
&\le
u^\top\Paren{\Sigma'' - \Paren{1-O(\e)}\paren{\Sigma'}^{1/2}\Paren{\sum_{i\in G} w_i \zeta_i \zeta_i^\top} \paren{\Sigma'}^{1/2}}u 
\\&\le
u^\top\Paren{\Sigma'' - \Paren{1-O(\e)}\paren{\Sigma'}^{1/2}\Paren{\sum_{i\in G_u} \tfrac{1}{n} \zeta_i \zeta_i^\top} \paren{\Sigma'}^{1/2}}u
\\&\le
u^\top\Paren{\Sigma'' - \Paren{1-O(\e)}\paren{\Sigma'}^{1/2}\Paren{\sum_{i\in G_u} \tfrac{1}{\card{G_u}} \zeta_i \zeta_i^\top} \paren{\Sigma'}^{1/2}}u
\\&\le
u^\top\Paren{\Sigma'' - \Paren{1-O(\e)}\Paren{1-\delta}\paren{\Sigma'}^{1/2}\paren{\Sigma'}^{1/2}}u
\\&\le u^\top\Paren{\Sigma'' - \Paren{1-O(\delta)}\Sigma'}u
\end{align*}
Hence $\Sigma'' \succeq \Paren{1-O(\delta)} \Sigma'$.
\end{proof}

\begin{lemma}
    Let $R\le O(1)$. Let $\zeta_1\zeta_1^\top,\ldots,\zeta_n\zeta_n^\top$ be an $\Paren{10\e,\delta,r,\cS_R, \cP_R}$-stable set with respect to $\Id_d$ and $2\Id_{d^2}$, where $\e \le \delta \lesssim 1$ and $r\le O(1)$. 
    
    Let $\set{z_1,\ldots,z_n}$ be an $\e$-corruption of $\set{\Paren{\paren{\Sigma'}^{1/2}\zeta_1}\Paren{\paren{\Sigma'}^{1/2}\zeta_1}^\top,\ldots, \Paren{\paren{\Sigma'}^{1/2}\zeta_n}\Paren{\paren{\Sigma'}^{1/2}\zeta_n}^\top}$. Let $w\in \cW_\e$. 
    Denote $\Sigma'' = \frac{1}{\sum_{i=1}^n w_i}\sum_{i=1}^n w_i z_i z_i^\top$ and
    $C = \frac{1}{\sum_{i=1}^n w_i}\sum_{i=1}^n w_i \Paren{\Paren{\Sigma''}^{-1/2}z_iz_i^\top \Paren{\Sigma''}^{-1/2} - \Id}^{\otimes 2}$. Suppose that 
    \[
    \max_{P\in\cP_1}\Abs{\Iprod{C - Q,P}}\le \lambda\,.
    \]
    
Denote $A = \Paren{\Sigma''}^{-1/2}\Paren{\Sigma'}^{1/2}$.
Then $A\zeta_1\Paren{A\zeta_1}^\top,\ldots,A\zeta_n\Paren{A\zeta_n}^\top$ is an $\Paren{\e,\delta',r,\cS_1, \cP_1}$-stable set with respect to $M = AA^\top$ and $Q = 2\Id_{d^2}$,
where $r\le O(1)$ and
\[
\delta' = O\Paren{\delta + \e^{3/4}\lambda^{1/4}}\,.
\]
\end{lemma}
\begin{proof}
By \cref{lem:stability-linear-transformation} and \cref{lem:non-decreasing-covariance} $A\zeta_1\Paren{A\zeta_1}^\top,\ldots,A\zeta_n\Paren{A\zeta_n}^\top$ is an $\Paren{\e,O\Paren{\delta + \sqrt{\e}},r,\cS_1, \cP_1}$-stable set with respect to $M = AA^\top$ and $Q = 2\Id_{d^2}$. By \cref{lem:stability},
\[
\norm{AA^\top - \Id_d} \le O\Paren{\delta + \sqrt{\e} + \sqrt{\e \lambda} + \e r}\,.
\]
Therefore, applying \cref{lem:stability-linear-transformation} again, we get the desired result.
\end{proof}

\begin{lemma}\label{lem:adequate}
For each $R > 0$, $\cP_R$ is adequate with respect to $\cS_R$  (see \cref{def:adequate}). 
\end{lemma}
\begin{proof}
Let $A \in \R^{d\times d}$. Then  
$\pE \iprod{A\otimes A, v^{\otimes 4}} = \pE \Paren{v^\top A v}^2 \ge 0$.

Let us prove the second property. Let $A, B\in \R^{d\times d}$. Then
\[
\Paren{\pE \iprod{A \otimes B, v^{\otimes 4}}}^2 = \Paren{\pE \Paren{v^\top A v}\Paren{v^\top B v}}^2 \le \pE \Paren{v^\top A v}^2 \cdot \pE  \Paren{v^\top B v}^2\,,
\]
where we used the \CS inequality for pseudo-distributions. 
By \cref{fact:spectral-certificates}, there is a degree-2 sos proof of the inequalities
\[
-\norm{A}\cdot \norm{v}^2\le {v^\top A v} \le \norm{A}\cdot \norm{v}^2\,.
\]
Hence each degree-4 pseudo-distribution that satisfies the constraint $\norm{v}^2\le R$,
\[
\Paren{\pE \iprod{A \otimes B, v^{\otimes 4}}}^2 \le \pE \Paren{v^\top A v}^2 \cdot \pE  \Paren{v^\top B v}^2\le R^2 \norm{A}^2 \cdot \norm{B}^2 = \Paren{\max_{uu^\top\in\cS_R}\iprod{uu^\top, A}^2} \cdot \Paren{\max_{uu^\top\in\cS_R}\iprod{uu^\top, B}^2}\,.
\]
\end{proof}

\cref{cor:spectral-stability}, \cref{lem:stability-linear-transformation} and \cref{lem:adequate} imply that we can use \cref{thm:filtering} and get the desired estimator in relative spectral norm.
\section{Estimation in Frobenius Norm}

\begin{proposition}\label{prop:Hanson-Wright}
    Let $0 < \e \lesssim 1$ and $R \le O(1)$.
    Let $\cD$ be a distribution in $\R^d$ with zero mean and identity covariance that satisfies $O(1)$-Hanson-Wright property, and let $x_1\ldots,x_n \simiid \cD$ for $n\gtrsim d^2\log^5(d)/\e^2$. 
    Then, with high probability, the set $S = \set{x_1x_1^\top, \ldots, x_nx_n^\top}$ is
 $\Paren{\e,\delta,r,\cB_R, \cB_R\otimes \cB_R}$-stable with respect to $M=\Id_d$ and $Q = \E \Paren{x_ix_i^\top - \Id}^{\otimes 2}$, where $\e \le \delta \le O(\e\log(1/\e))$ and $r\le O(1)$. 
\end{proposition}

\begin{proof}
    Condition 1 of stability follows from $O(1)$-sub-Gaussianity. Note that sub-Gaussianity follows from the Hanson-Wright if we plug $A = uu^\top$ for unit $u \in \R^d$. 
    Let us show condition 2.  
    Let $J\subseteq[n]$ be a set of size $\card{J} \ge (1-\e)n$,  and let $V\in \cB_R$. Note that
    \begin{align*}
    \tfrac{1}{\card{J}} \sum_{i\in J} \iprod{V^{\otimes 2}, \paren{x_ix_i^\top-\Id_d}^{\otimes 2}} 
    = \tfrac{1}{\card{J}} \sum_{x=1}^n \iprod{V^{\otimes 2}, \paren{x_ix_i^\top-\Id_d}^{\otimes 2}} 
    - \tfrac{1}{\card{J}} \sum_{i\in [n]\setminus J} \iprod{V^{\otimes 2}, \paren{x_ix_i^\top-\Id_d}^{\otimes 2}}\,.
    \end{align*}
    By the Hanson-Wright inequality, with high probability all $d^2$-dimensional vectors $x_ix_i^\top-\Id_d$ have norm at most $O(d)$. Hence we can apply matrix Bernstein inequality, and it implies that with high probability the sample covariance is close to the empirical covariance. That is, with high probability, for all $V\in \cB_R$,
    \[
    \Abs{\tfrac{1}{n}\sum_{x=1}^n \iprod{V^{\otimes 2}, \paren{x_ix_i^\top-\Id_d}^{\otimes 2}}
    - \E \iprod{V^{\otimes 2}, \paren{x_ix_i^\top-\Id_d}^{\otimes 2}}} \le O(\e)\,.
    \]

    Note that the Hanson-Wright inequality also implies that $\E \iprod{V^{\otimes 2}, \paren{x_ix_i^\top-\Id_d}^{\otimes 2}} \le O(1)$. Hence 
    \begin{align*}
    \tfrac{1}{\card{J}} \sum_{x=1}^n \iprod{V^{\otimes 2}, \paren{x_ix_i^\top-\Id_d}^{\otimes 2}} &= \Paren{1+O(\e)}\cdot \Paren{\E \iprod{V^{\otimes 2}, \paren{x_ix_i^\top-\Id_d}^{\otimes 2}} + O(\e)} 
    \\&= \E \iprod{V^{\otimes 2}, \paren{x_ix_i^\top-\Id_d}^{\otimes 2}} + O(\e)
    \\&= \iprod{V^{\otimes 2}, Q} + O(\e)\,.
    \end{align*}

    Bounding $\tfrac{1}{\card{J}} \sum_{i\in [n]\setminus J} \iprod{V^{\otimes 2}, \paren{x_ix_i^\top-\Id_d}^{\otimes 2}}$ by $O(\e \log^2(1/\e))$ can be done in exactly the same way as in Lemma 5.21 from \cite{DiakonikolasKK016}. For that Lemma they need condition 4 of Definition 5.15, that is shown in the proof of Lemma 5.17. Their proof uses only the Hanson-Wright inequality and no other properties of Gaussians, and it is enough to have $n\gtrsim d^2\log^5(d)/\e^2$ samples for it.

    As was observed before, the Hanson-Wright inequality implies condition 3 of stability: $\iprod{V^{\otimes 2}, Q} = \E \iprod{V^{\otimes 2}, \paren{x_ix_i^\top-\Id_d}^{\otimes 2}} \le O(1)$.
\end{proof}

\begin{lemma}\label{lem:frobenius-stability-isotropic}
  Let $0 < \e \lesssim 1$ and $R \le O(1)$.  Let $\Sigma \in \R^{d\times d}$ be a positive definite matrix such that $0.9 \cdot \Id_d \preceq \Sigma \preceq 1.1\cdot \Id_d$. Let $y_1,\ldots,y_n\simiid N(0,\Sigma)$, $\Sigma' = \Cov_{y\sim N(0,\Sigma)} \cF(y)$, where $\cF$ is as in \cref{def:function}. Let $\zeta_i = \Paren{\Sigma'}^{-1/2}\cF(y_i)$.

Then, with high probability, $\zeta_1\zeta_1^\top,\ldots,\zeta_n\zeta_n^\top$ is an $\Paren{\e,\delta,r,\cB_R, \cB_R\otimes \cB_R}$-stable set with respect to $M=\Id_d$ and $Q = \E_{y\sim N(0,\Sigma)} \Paren{\Paren{\Sigma'}^{-1/2}\cF(y)\cF(y)^\top\Paren{\Sigma'}^{-1/2} - \Id}^{\otimes 2}$, where $\e \le \delta \le O\paren{\e\log(1/\e)}$ and $r\le O(1)$. 
\end{lemma}
\begin{proof}
    By \cref{prop:Hanson-Wright} it is enough to show that the distribution of $\zeta$ satisfies $O(1)$-Hanson-Wright property. Since $\cF(y) = \cF(\Sigma^{1/2}g)$ is a $O(1)$-Lipschitz function of $g \sim \cN(0,1)$, and since $\Sigma'$ is $O(1/d)$-close to $\Sigma$ by \cref{lem:closeness}, $\Paren{\Sigma'}^{-1/2}\cF(\Sigma^{1/2}g)$ is also a $O(1)$-Lipschitz function of $g \sim \cN(0,1)$. By \cite{log-sobolev-are-hanson-wright}, this distribution satisfies $O(1)$-Hanson-Wright property.
\end{proof}

\begin{lemma}\label{lem:frobenius-stability}
      Let $0 < \e \lesssim 1$ such that $\e\log(1/\e) \gtrsim \log^2(d)/d$.  Let $\Sigma \in \R^{d\times d}$ be a positive definite matrix such that 
      \[
      \Paren{1-O(\e\log(1/\e)} \cdot \Id_d \preceq \Sigma \preceq\Paren{1+O(\e\log(1/\e)} \cdot \Id_d\,.
      \]
      Let $y_1,\ldots,y_n\simiid N(0,\Sigma)$ and $\cF$ be as in $\cref{def:function}$.

Then, with high probability, $\cF(y_1)\cF(y_1)^\top,\ldots,\cF(y_n)\cF(y_n)^\top$ is an $\Paren{\e,\delta,r,\cB_1, \cB_1\otimes \cB_1}$-stable set with respect to $M=\E\cF(y)\cF(y)^\top$ and $S = \E_{g\sim N(0,\Id)}\Paren{gg^\top / \norm{g}^2 - \Id}^{\otimes 2}$, where $\e \le \delta \le O\paren{\e\log(1/\e)}$ and $r\le O(1)$. 
\end{lemma}

\begin{proof}
    By \cref{lem:closeness},
     \[
      \Paren{1-O(\e\log(1/\e)} \cdot \Id_d \preceq \Sigma \preceq\Paren{1+O(\e\log(1/\e)} \cdot \Id_d\,.
      \]
    Denote $\zeta_i = \Paren{\Sigma'}^{-1/2}\cF(y_i)$. Let $J\subseteq[n]$ be a set of size $\card{J} \ge (1-\e)n$,  and let $V\in \cB_1$. Note that
    \[
    \tfrac{1}{\card{J}} \sum_{i\in J} \iprod{V, \cF(y_i)\cF(y_i)^\top-\E\cF(y)\cF(y)^\top} = \tfrac{1}{\card{J}} \sum_{i\in J} \iprod{\Paren{\Sigma'}^{-1/2}V\Paren{\Sigma'}^{-1/2}, \zeta_i \zeta_i^\top -\Id_d}
    \]
    By \cref{lem:frobenius-stability-isotropic}, $\zeta_1\zeta_1^\top,\ldots,\zeta_n\zeta_n^\top$ is $\Paren{\e,\delta,r,\cB_2, \cB_2\otimes \cB_2}$-stable
    with respect to $\Id_d$ and $Q = \E_{y\sim N(0,\Sigma)} \Paren{\Paren{\Sigma'}^{-1/2}\cF(y)\cF(y)^\top\Paren{\Sigma'}^{-1/2} - \Id}^{\otimes 2}$, hence the quantity above is bounded by $\delta = O(\e\log(1/\e))$, so condition 1 of stability is satisfied.

    Let us show the second condition. Note that
    \begin{align*}
   \tfrac{1}{\card{J}} \sum_{i\in J} \iprod{V^{\otimes 2}, \Paren{\cF(y_i)\cF(y_i)^\top-\E\cF(y)\cF(y)^\top}^{\otimes 2}} 
   &=
    \tfrac{1}{\card{J}} \sum_{i\in J} \iprod{V, \Paren{\cF(y_i)\cF(y_i)^\top-\E\cF(y)\cF(y)^\top}}^2 
    \\&=
        \tfrac{1}{\card{J}} \sum_{i\in J} \iprod{U, \Paren{\zeta_i \zeta_i^\top -\Id_d}}^2 \,,
        \end{align*}
    where $U = \Paren{\Sigma'}^{-1/2}V\Paren{\Sigma'}^{-1/2}$. By stability of $\zeta_1\zeta_1^\top,\ldots,\zeta_n\zeta_n^\top$, this quantity is $O(\delta^2/\e)$-close to $\iprod{U^{\otimes 2}, Q}$. Hence it is enough to show that
    \[
    \Abs{\iprod{U^{\otimes 2}, Q} - \iprod{V^{\otimes 2}, S}} \le O(\e\log^2(1/\e))\,.
    \]
    Since $\Paren{\Sigma'}^{-1/2} = \Id + A$, where $\Norm{A}\le O(\e\log(1/\e))$, 
    \[
    U = V + A V + VA + AVA\,.
    \]
    Since $\normf{AV}\le \norm{A}\cdot \normf{V} \le O(\e\log(1/\e))$,
    $U = V + \Delta$, where $\normf{\Delta} \le O(\e\log(1/\e))$.
    Hence $\iprod{\Delta^{\otimes 2}, Q} \le O(\e^2\log^2(1/\e))$. By the \CS inequality, 
    \[
    \Abs{\Iprod{V\otimes \Delta, Q}}\le \sqrt{\iprod{\Delta^{\otimes 2}, Q} \cdot \iprod{V^{\otimes 2}, Q}} \le  O(\e\log(1/\e))\,.
    \]
    Hence $\Abs{\iprod{U^{\otimes 2}, Q} - \iprod{V^{\otimes 2}, Q}} \le O(\e\log(1/\e))$, and it is enough to bound $\Iprod{V^{\otimes 2}, Q-S}$. By \cref{lem:tensor}, in the eigenbasis of $\Sigma$, for all $i,j\in [d]$, $Q_{ijij} = S_{ijij} + O(1/d)$, and $Q_{iijj} = S_{iijj} + O\Paren{\e\log(1/\e)/d}$. First let us bound the entries of the form $ijjij$:
    \[
    \sum_{i,j\in[d]}V_{ij}V_{ij}\Paren{Q_{ijij} - S_{ijij}} \le O(1/d)\cdot \sum_{i,j\in[d]}V_{ij}^2 \le O(1/d) \le O(\e\log(1/\e))\,.
    \]
    Now consider the entries of the form $iijj$. Since $\abs{\sum_{i=1}^d V_{ii}} \le \sqrt{d}$,
    \[
    \sum_{i,j\in[d]}V_{ii}V_{jj}\Paren{Q_{ijij} - S_{ijij}} \le O(\e\log(1/\e)/d)\cdot d \le O(\e\log(1/\e))\,.
    \]
    Since all other components of both $Q$ and $S$ are $0$, we get the desired bound.

    Condition 3 of the stability is clearly satisfied for $S$ (in particular, it follows from the proof above, since $Q$ is very close to $S$, and this condition is satisfied for $Q$).
\end{proof}

\cref{lem:frobenius-stability} and \cref{thm:filtering} imply that we get the desired estimator in relative Frobenius norm.
\section{Applications}

\subsection{Robust PCA}

In this subsection we briefly discuss how to derive \cref{cor:pca} from \cref{thm:main}. By Davis-Kahan inequality \cite{davis-kahan}, $\hat{v}$ is either ${O}(\e\log(1/\e)/\gamma)$-close to $v$ or to $-v$. Let us without loss of generality assume that it is close to $v$. Then
\[
\normf{\hat{v}\hat{v}^\top - vv} = \normf{\Paren{\hat{v} - v}\Paren{\hat{v} + v}^\top + \Paren{\hat{v} + v}\Paren{\hat{v} - v}^\top} \le O\Paren{\norm{\hat{v} - v}} \le {O}(\e\log(1/\e)/\gamma)\,.
\]

\subsection{Robust covariance estimation}

In this subsection we show how to estimate the scaling factors assuming the Hanson-Wright property or sub-exponentiality. Further in this section we denote by $\Sigma$ the covariance of $\cD$. First we assume that the mean of $\cD$ in $0$, and then we explain how the case $\mu\neq 0$ can be reduced to it.

By \cref{thm:main}, we can compute $\hat{\Sigma}$ such that
\[
\Normf{\paren{\hat{\Sigma}}^{-1/2}\Paren{\gamma\Sigma}\paren{\hat{\Sigma}}^{-1/2} - \Id} \le O\Paren{\e\log(1/\e)}\,,
\]
where $\gamma > 0$ is unknown. Our goal is to find a good estimator $\hat{\gamma}$ of $\gamma$, and then use $\hat{\Sigma}/\hat{\gamma}$ as an estimator for $\Sigma$.

Let us use fresh samples (as before, we can split the original sample into several parts, and work with a new part for a new estimation), and let us apply the transformation $\hat{\Sigma}^{-1/2}$ to the samples. The covariance of the resulting distribution is $\tilde{\Sigma} = \paren{\hat{\Sigma}}^{-1/2}\Sigma \paren{\hat{\Sigma}}^{-1/2}$, and it satisfies
\[
\normf{\gamma\tilde{\Sigma} - \Id} \le O\Paren{\e\log(1/\e)}\,.
\]

In particular, $\effrank(\tilde{\Sigma})\ge \Omega(d)$. Let us robustly estimate $\Tr(\tilde{\Sigma})$. 
By the Hanson-Wright inequality,
    \[
    \Pr_{x\sim\cD} \Paren{\Abs{\norm{x}^2 - \Tr \tilde{\Sigma}} \ge t} \le 2\exp\Paren{-\frac{1}{C_{HW}} \min\Set{\frac{t^2}{\normf{\tilde{\Sigma}}^2}, \frac{t}{\norm{\tilde{\Sigma}}}}} \le 2\exp\Paren{-\Omega\Paren{ \min\Set{\frac{d \cdot t^2}{\Tr(\tilde{\Sigma})^2}, \frac{\sqrt{d} \cdot t}{\Tr(\tilde{\Sigma})}}}}
    \,,
    \]
    where we used the fact that $\effrank(\tilde{\Sigma})\ge \Omega(d)$. Hence for all even $p\in\N$,
    \[
    \Paren{\E\Abs{\norm{x}^2 - \Tr \tilde{\Sigma}}^p}^{1/p} \le O\Paren{\Tr\paren{\tilde{\Sigma}}\cdot p/\sqrt{d}}\,.
    \]
    Now, applying one-dimensional truncated mean estimation (see, for example, Proposition 1.18 from \cite{DK_book}), we get an estimator $\hat{T}$ such that with high probability
    \[
    \Abs{\hat{T} - \Tr \tilde{\Sigma}} \le O\Paren{\Tr\paren{\tilde{\Sigma}} \cdot \e\log(1/\e) / \sqrt{d}}\,.
    \]

    Hence
    \[
    \Abs{1 - \Tr \paren{\tilde{\Sigma}}/\hat{T}} \le O\Paren{\e\log(1/\e) / \sqrt{d}}\,.
    \]

    Let $\hat{\gamma} = d/\hat{T}$, and $\hat\Sigma' = \hat{\Sigma}/\gamma = \paren{\hat{T}/d}\cdot \hat\Sigma$. 
    Then
    \begin{align*}
\Normf{\paren{\hat{\Sigma}'}^{-1/2}\Paren{\Sigma}\paren{\hat{\Sigma}'}^{-1/2} - \Id} 
&=  \Normf{\hat{\gamma}\tilde{\Sigma} - \Id} 
\\&= \Normf{\frac{\hat{\gamma}}{\gamma} \gamma\tilde{\Sigma} - \Id} 
\\&= \Normf{\Paren{1+O\Paren{\e\log(1/\e) / \sqrt{d}}} \gamma\tilde{\Sigma} - \Id}
\\&\le \normf{\gamma\tilde{\Sigma} - \Id} + O\Paren{\e\log(1/\e) / \sqrt{d}}\Normf{\gamma\tilde{\Sigma}}
\\&\le
 O\Paren{\e\log(1/\e)} + O\Paren{\e^2\log^2(1/\e)/\sqrt{d}}
 + O\Paren{\e\log(1/\e) / \sqrt{d}}\Normf{\Id_d}
 \\&\le
 O\Paren{\e\log(1/\e)}\,.
\end{align*}

Now let us prove the spectral norm bound assuming $O(1)$-sub-exponentiality. As before, we work with a transformed distribution with covariance $\tilde{\Sigma}$ such that for some $\gamma > 0$, $\Normf{\paren{\hat{\Sigma}}^{-1/2}\Paren{\gamma\Sigma}\paren{\hat{\Sigma}}^{-1/2} - \Id} \le O\Paren{\e\log(1/\e)}$.
Since $x = \xi \cdot \tilde{\Sigma}^{1/2} g/\norm{g}$, where $g\sim\cN(0,\Id)$,
for all $u\in \R^d$ and even $p\in \N$,
\[
\Paren{\E\xi^p\iprod{u, g/\norm{g}}^p}^{1/p} \le O\Paren{p\Paren{\E\xi^2\iprod{u, g/\norm{g} }^2 }^{1/2}}\,.
\]

Since $\xi$ and $g$ are independent, 
\[
\Paren{\E\xi^p}^{1/p} \le O\Paren{p\Paren{\E\xi^2}^{1/2} \cdot \frac{\Paren{\E\iprod{u, g/\norm{g}}^2 }^{1/2}}{\Paren{\E\iprod{u, g/\norm{g}}^p}^{1/p}} }\,.
\]
Note that since $g/\norm{g}$ and $g$ are independent, $\E\iprod{u, g}^p = \E\norm{g}^p \cdot \E\iprod{u, g/\norm{g}}^p$. Since $\norm{g}^2$ has chi-squared distribution, its $(p/2)$-th moment is $2^{p/2}\cdot \Gamma(d/2 + p/2) / \Gamma(d/2)$. Hence for all $p\le d$,
\[
\Paren{\E\norm{g}^p}^{1/p} = \Theta\Paren{\sqrt{d}}\,.
\]
Since for all $p$,
\[
\Paren{\E\iprod{u, g}^p}^{1/p} = \Theta(\sqrt{p} \norm{u})\,,
\]
we get for all $p\le d$
\[
\Paren{\E\xi^p}^{1/p} \le O\Paren{\sqrt{p}\Paren{\E\xi^2}^{1/2} }\,.
\]

Now consider 
\[
\E\norm{x}^p = \E\xi^p \cdot \E \norm{\tilde{\Sigma}^{1/2} g/\norm{g}}^p\,.
\]
Since $\norm{\tilde{\Sigma}^{1/2}} = \Theta(1)$, for all $p\le d$ we get 
\[
\Paren{\E\norm{x}^p}^{1/p} \le O\Paren{\sqrt{p}\Paren{\E\xi^2}^{1/2}}\le 
O\Paren{\sqrt{p}\E \norm{x}^2}^{1/2}
\,.
\]

Applying one-dimensional truncated mean estimation with $p = O(\log(1/\e))$, we get an estimator $\hat{T}$ such that with high probability
    \[
    \Abs{\hat{T} - \Tr \tilde{\Sigma}} \le O\Paren{\Tr\paren{\tilde{\Sigma}} \cdot \e\log(1/\e)}\,.
    \]
The rest of the proof is exactly the same as for the Hanson-Wright distributions (with relative spectral norm instead of relative Frobenius norm).

Now let us explain how to deal with the case when the mean of $\cD$ is nonzero. As for covariance estimation, we can use symmetrization $y = \paren{x-x'}/\sqrt{2}$. Denote $z = x - \mu$ and $z' = x' - \mu'$. We need to show the Hanson-Wright inequality for all positive definite matrices $A$ with $\effrank(A) \ge \Omega(d)$. Since the inequality is scale invariant, let us assume $\Tr(A) = d$. Then 
\[
y^\top Ay- \Tr(A) = \frac{1}{2}\Paren{z^\top A z - \Tr(A)} + \frac{1}{2}\Paren{(z')^\top Az' - \Tr(A)} + \frac{1}{2}z^\top A z' + \frac{1}{2}(z')^\top A z\,.
\]
The first two terms are bounded by $O\Paren{\sqrt{d\log(1/\delta)} + \log(1/\delta)}$ with probability $1-\delta$ (by the Hanson-Wright inequality). Let us bound the third term. Let us condition on $z'$ and treat it as a constant. By the Hanson-Wright inequality, with probability $1-\delta$,
\[
\Paren{z^\top A z'}^2 \le O\Paren{\sqrt{\norm{Az'}^2\log(1/\delta)} + \norm{Az'}^2\log(1/\delta)}\,.
\]

By the Hanson-Wright inequality for $z'$, $\norm{Az'}^2 \le O(\norm{z'}^2) \le  d + O\Paren{\sqrt{d\log(1/\delta)} + \log(1/\delta)}$ with probability $1-\delta$. Note that we only need to apply the Hanson-Wright inequality for $\delta$ such that $\log(1/\delta) \le O(\log(1/\e)) \le O(\log d)$. For such $\delta$, 
\[
z^\top A z' \le O\Paren{\sqrt{d\log(1/\delta)}}.
\]
The term $(z')^\top A z$ is also bounded by this value. Since this bound is enough to obtain the bound on the $O(\log(1/\e)$-th moment that we use, we get the desired result.

Let us now consider the sub-exponential case. Note that
\[
O\Paren{\E \iprod{z, u}^p}^{1/p} \le O\Paren{\Paren{\E \iprod{x - \mu, u}^p}^{1/p} + \Paren{\E \iprod{x' - \mu, u}^p}^{1/p}} \le 
O\Paren{\Paren{\E \iprod{x - \mu, u}^2}^{1/2}}\,.
\]
Note that $\iprod{z, u}^2 = \tfrac{1}{2}\iprod{x - \mu, u}^2 + \tfrac{1}{2}\iprod{x' - \mu, u}^2 - \iprod{x - \mu, x'-\mu}$, so $\E \iprod{z, u}^2 = \E  \iprod{x - \mu, u}^2$, and we get the desired bound.


\section{Sum-of-Squares Proofs}
\label{sec:sos}

We use the standard sum-of-squares machinery, used in numerous prior works on algorithmic statistics, e.g. \cite{KS17, hopkins2018mixture,Bakshi, sos-poincare, sos-subgaussian}. 

In particular, we use the following facts:

\begin{fact}[Spectral Certificates] \label{fact:spectral-certificates}
For any $m \times m$ matrix $A$, 
\[
\sststile{2}{u} \Set{ \iprod{u,Au} \leq \Norm{A} \Norm{u}_2^2}\mper
\]
\end{fact}

\begin{fact}[Cauchy-Schwarz for Pseudo-distributions]
Let $f,g$ be polynomials of degree at most $d$ in indeterminate $x \in \R^d$. Then, for any degree d pseudo-distribution $D$,
$\pE_{D}[fg] \leq \sqrt{\pE_{D}[f^2]} \sqrt{\pE_{D}[g^2]}$.
 \label{fact:pseudo-expectation-cauchy-schwarz}
\end{fact}


\end{document}